\DeclareMathOperator*{\argmin}{arg\,min} 
\newtheorem{Definition}{Definition}
\newtheorem{Theorem}{Theorem}
\newtheorem{Lemma}{Lemma}
\title{Quantum Annealing for Computer Vision Minimization Problems}
\author{Shahrokh  Heidari~* \and Michael J.~Dinneen~* \and Patrice Delmas \thanks{School of Computer Science,
The University of Auckland, New Zealand}}
\begin{document}

\maketitle

\begin{abstract}
    \label{abstract}
Computer Vision (CV) labelling algorithms play a pivotal role in the domain of low-level vision. For decades, it has been known that these problems can be elegantly formulated as discrete energy minimization problems derived from probabilistic graphical models (such as Markov Random Fields). Despite recent advances in inference algorithms (such as graph-cut and message-passing algorithms), the resulting energy minimization problems are generally viewed as intractable. The emergence of quantum computations, which offer the potential for faster solutions to certain problems than classical methods, has led to an increased interest in utilizing quantum properties to overcome intractable problems. 
This study investigates a new Quantum Annealing based inference algorithm for CV discrete energy minimization problems. Our contribution is focused on Stereo Matching as a significant CV labeling problem. As a proof of concept, we also use a hybrid quantum-classical solver provided by D-Wave System to compare our results with the best classical inference algorithms in the literature. 
\end{abstract}

\section{Introduction}
Computer Vision (CV) is a field of study focusing on how computers gain high-level perception from digital images or videos, which can help decision-making in real-world environments. While humans routinely interpret the environment, enabling computers to perceive the real world from its representation through images remains a largely unsolved problem. Many problems in CV are formulated as labeling problems. A CV labeling problem consists of a set of image features (such as pixels, edges, or image segments) on which we want to estimate quantities from a set of labels \cite{veksler1999efficient} (such as intensity in Image Restoration or disparity in Stereo Matching and Motion). Generally, CV labeling problems are modeled by a discrete minimization problem, where an objective function is defined to be optimized over a set of possible labelings. When this objective function measures the badness, the optimization problem is often called energy minimization, and the objective function is referred to as an energy function \cite{felzenszwalb2010dynamic}. Given the intrinsically tricky nature of CV minimization problems, researchers have always been looking for efficient algorithms to approximate the optimal solution as fast and accurately as possible. Thus, there has been significant development in minimization algorithms for CV problems from the classical methods in the 90s, such as Simulated Annealing \cite{vcerny1985thermodynamical}, Mean-field Annealing \cite{geiger1991parallel}, and Iterated Conditional Modes (ICM) \cite{besag1986statistical} to the recent state-of-the-art algorithms, such as graph-cut based \cite{boykov1998markov,birchfield1999multiway,boykov2001fast,komodakis2007approximate,wainwright2004tree,wainwright2005map} and message-passing based \cite{sun2003stereo,felzenszwalb2006efficient,yu2007efficient} approaches (we refer interested readers to the most recent comparative studies on CV minimization algorithms \cite{tappen2003comparison,szeliski2006comparative,kolmogorov2006comparison,szeliski2008comparative,kappes2015comparative}). Despite being extensively researched and even considering the most recent advances using deep learning-based strategies~\cite{voulodimos2018deep}, which are computationally expensive, CV labeling problems are still considered open problems with no prefect (optimal) solutions.  \\
 
 Therefore, researchers have always been looking for alternatives to tackle the problem.  With the advent of quantum computations which promise potentially lower-time complexity on certain problems than the best-classical counterparts~\cite{denchev2016computational,king2019quantum,yaacoby2021comparison}, recent studies have focused on leveraging quantum properties to overcome intractable classical problems using Quantum Annealing (QA). D-Wave System was the first company to build a Quantum Processing Unit (QPU) that naturally approximates the ground state of a particular problem representation, namely Ising model~\cite{mcgeoch2014adiabatic}. The importance of Ising models is that one can solve a variety of NP-hard optimization problems by finding the corresponding ground state~\cite{lucas2014ising,calude2017solving,calude2017qubo}. Despite the promising experiments~\cite{denchev2016computational,king2019quantum}, D-Wave QPUs are specifically designed to solve optimization problems, making them less versatile than other quantum computing approaches. This restricts their application domain primarily to optimization and sampling tasks, while they may not be suitable for more general-purpose computing requirements. Also, D-Wave QPUs exhibit limited qubit connectivity, and the scarcity of available qubits has been consistently challenging, from the 128-qubit D-Wave One built in 2011 to the newly released 5000-qubit D-Wave Advantage. Therefore, large CV problems involving highly non-convex functions in a search space of many thousands of dimensions have not been widely explored to see if QA can provide advantages in real-world CV problems. In recent years, there has been a growing interest in Quantum Computer Vision (QCV), largely fueled by recent advancements in D-Wave QPU architectures and their capabilities in solving optimization problems, such as Classification~\cite{adachi2015, dixit2020,koshka2016,koshka2017,koshka2018}, Synchronization \cite{birdal2021quantum,arrigoni2022quantum}, Tracking \cite{zaech2022adiabatic}, Fitting \cite{doan2022hybrid,farina2023quantum}, Detection \cite{li2020quantum}, and Matching \cite{cruz2018qubo,heidari2021improved,benkner2020adiabatic,yurtsever2022q,benkner2022quant} problems. However, each method employs a distinct quantum model to represent the respective CV problem, allowing it to be minimized on a D-Wave QPU. A versatile framework for converting a CV problem into an appropriate quantum model holds significant value. Such a flexible solution not only simplifies the process of adapting various CV problems for quantum computation but also opens up new avenues for harnessing the power of quantum computing in addressing intricate optimization tasks. 
 
 In this study, we aim to focus on a challenging labeling problem, Stereo Matching, and provide a general-purpose quantum model that can be used for any CV labeling problem (such as Image Segmentation, Image Restoration, Image Registration, Optical Flow, Object Detection, and Image Inpainting). Due to the scarcity of available qubits on the current D-Wave QPUs, we use a D-Wave hybrid quantum-classical solver to show the feasibility of the proposed quantum model once enough qubits are available. Our findings show that Quantum Annealing can offer promising results in CV applications compared to the state-of-the-art CV minimization inference algorithms. 
 
 The paper is organized as follows: In Section \ref{Stereo_Matching}, we provide a brief introduction to Stereo Matching, an important CV labeling problem. In Section \ref{Quantum_Annealing}, we shift the focus to Quantum Annealing and the D-Wave QPU. Our general-purpose quantum solution to Stereo Matching and its proof of correctness are presented in Section \ref{Quantum_Stereo_Matching}. We provide experimental results and numerical evaluation in Section \ref{Expriments}. Finally, Section \ref{Conclusion} concludes the paper.

\section{Stereo Matching}
\label{Stereo_Matching}
The characteristics of binocular vision in humans allow for the simultaneous observation of a singular object by both eyes. This ability significantly contributes to the understanding of depth in the brain. The distance between our eyes, often referred to as ``baseline", facilitates slight variation in the perspective captured by each eye.  Despite each eye observing a nearly identical image, a marginal displacement exists. The brain uses this displacement to perceive a 3D observation from the scene. Likewise, a stereo vision system is designed to replicate human vision mechanisms. This system comprises two horizontal cameras on the left and right sides, effectively simulating human binocular perception. Each camera in the system records an image that, while fundamentally similar, features a certain degree of displacement. This displacement, often called disparity, signifies the difference in the position of a 3D point, as observed from two different viewpoints (the left and right viewpoints)~\cite{hamzah2016literature}.  By implementing a stereo vision system, the main goal is to construct a 3D model using the left and right stereo images. This procedure may encompass various stages, including Camera Calibration (optional), Rectification, Stereo Matching, and 3D Reconstruction~\cite{malekabadi2019disparity}. Identifying corresponding 2D projections in stereo images is known as Stereo Matching. This process represents the most computationally demanding component of a stereo vision system. When the left image serves as the reference, the goal of Stereo Matching is to find the corresponding pixel in the right image for every pixel in the left image. Each matched pixel pair represent projections of a 3D point in the real world. Stereo Matching methods are broadly categorized into global and local approaches. While local methods prioritize speed, often at the cost of accuracy due to susceptibilities like local ambiguities and occlusions, global methods comprehensively consider the entire image during disparity computation. Although computationally demanding, they effectively address challenges such as occluded and textureless regions~\cite{hamzah2016literature}. These methods typically lean on probabilistic graphical models, a potent blend of probability and graph theory, for their formalism~\cite{wang2013markov}. Based on the defined probabilistic graphical model, an energy function is modeled which can be minimized to solve the Stereo Matching problem~\cite[p.~1612]{wang2013markov}. In the following, we provide the general form of a global Stereo Matching energy function, which can be adapted for any CV labeling problem (see the recent comparative study on CV labeling problems~\cite{kappes2015comparative} for more information). \\

Let $I_l$ and $I_r$ be a pair of $n\times m$ stereo images, and $D=\{d_{min}, \dots, d_{max}\}$ be a set of positive integers, where $d_{min}$ and $d_{max}$ are the lowest and highest possible disparity values, respectively. Considering the left image $I_l$ as the reference for which we want to compute a disparity map, the set of pixels is defined as (\ref{eq_P}). Here, we also initialize $N$ as a 4-neighborhood system defined in (\ref{eq_N}).

\begin{equation}
\label{eq_P}
\small
    P=\{(i,j)\mid i\in\{0,\dots, n-1\}, j\in\{0,\dots,m-1\}\},
\end{equation}
{\small
\begin{align}
\label{eq_N}
   N &= \{\{(i,j),(i',j')\} \mid (i,j)\in P,\\
   &(i',j')\in \{(i\pm 1, j), (i,j\pm1)\}\},\nonumber
\end{align}
}
\noindent%
where, $0\leq i'< n, 0\leq j' < m$. In a global Stereo Matching model, the Stereo Matching problem is modeled by a labeling problem where each pixel in $P$ is labeled by a disparity value in $D$~\cite[p.~5]{veksler1999efficient}. In fact, a ``labeling" involves mapping from $P$ to $D$. Such a labeling problem is defined by a discrete optimization problem, where an energy function is defined to be minimized over a set of possible labelings. This energy function has two terms. The first term penalizes the solutions when they are inconsistent with the data, and the second term imposes some constraints on spatial coherence~\cite[p.~1]{szeliski2008comparative}. Let $\mathbf{w}\in D^{n\times m}$ be a vector of variables defined as $\mathbf{w}=(w_{i,j})_{(i,j)\in P}$, where $w_{i,j}\in D$. The global Stereo Matching energy function $F:D^{n\times m} \rightarrow \mathbb{R}^+$ is defined as (\ref{eq_f}). 
\begin{align}
\label{eq_f}
 F(\mathbf{w})&=\sum_{(i,j)\in P}\theta_{\{i,j\}}(w_{i,j})+\lambda\sum_{\{(i,j),(i',j')\}\in N}\delta(w_{i,j}, w_{i',j'})
\end{align}
where,
\begin{align*}
&\theta_{i,j}(w_{i,j})= \left|I_l(i,j)-I_r(i-w_{i,j},j)\right|,\\\nonumber
&\delta(w_{i,j}, w_{i',j'})= 
\left\{%
\begin{array}{ll}
   0 , & \hbox{if $w_{i,j}=w_{i',j'}$;} \\
   1 , & \hbox{otherwise.} \\
\end{array}%
\right.
\end{align*}
\noindent The first term is the Sum of Absolute Difference (SAD) matching cost function defined by $\theta_{\{i,j\}}:D\rightarrow \mathbb{R}^+$. When $\theta_{\{i,j\}}(w_{i,j})$ is (or close to) zero, it means the pixel $(i,j)$ in the left image matches the pixel $(i-w_{i,j},j)$ in the right image, and they are more likely to be the same projections of a 3D point in the real world. In the second term, $\delta:D^2\rightarrow \{0,1\}$ is the penalty function that penalizes the variation of the disparities, adding one when the allocated disparities to a pair of neighboring pixels are not equal and zero otherwise. The second term assumes that the disparities of a neighborhood of pixels present some coherence and generally do not change abruptly~\cite{li1995markov}. Furthermore, $\lambda\in \mathbb{R}^+$, known as the smoothness factor, weighs the penalties given by the second term. 

We aim to provide a general-purpose quantum model for the defined global Stereo Matching problem, which can be used for any CV labeling problem. Thus, we first give the preliminaries to describe this quantum model. 

\section{Quantum Annealing}
\label{Quantum_Annealing}
Quantum Annealing ~\cite{farhi2000quantum} is a specialized optimization technique that leverages principles from quantum mechanics to solve complex computational problems. In this model, quantum bits (qubits) are particles in a quantum dynamical system that evolve based on special forces acting on them. These forces are either internal (from interactions among qubits) or external (from other sources). Each state of a register of qubits has energy based on the applied forces. A time-dependent Hamiltonian is a mathematical representation of a system, providing information about the system's energy and detailing the forces acting upon it at any given time~\cite{mcgeoch2014adiabatic}. Quantum Annealing is a computational technique employed to discover the state of the system with the minimum energy as determined by the time-dependent Hamiltonian. Consequently, Quantum Annealing constitutes a computational paradigm known for its efficiency in addressing optimization problems and providing approximations to the optimal solutions. It is inspired by the concept of annealing in metallurgy, where a material is slowly cooled to minimize defects and reach a low-energy state. In Quantum Annealing, this cooling process is simulated by a QPU known as a quantum annealer which is based on a time-dependent Hamiltonian $H(t)$ that has three components~\cite{mcgeoch2014adiabatic}: {Initial Hamiltonian} $H_I$, where all qubits are in a superposition state. \textit{Problem Hamiltonian} $H_p$, where the specific forces are defined to encode the objective function. The lowest-energy state of $H_p$ is the solution that minimizes the objective function. \textit{Adiabatic path} $s(t)$, which is a smooth function that decreases from $1$ to $0$, such as $s(t)=1-\frac{t}{t_f}$, where $s(t)$ decreases from $1$ to $0$ as $t$ increases from $0$ to some elapsed time $t_f$. During Quantum Annealing, the \textit{Initial Hamiltonian} is slowly 
evolved along the \textit{Adiabatic path} to the \textit{Problem Hamiltonian} as  $H(t)=s(t)H_I + (1-s(t))H_p$~\cite{mcgeoch2014adiabatic}, decreasing the influence of $H_I$ over time to reach $H_P$ as $s(t)$ goes from $1$ to $0$. D-Wave Systems was the first company to build a quantum annealer. To minimize/maximize an objective function using Quantum Annealing and a D-Wave QPU, it should be in a standard model like Ising or  Quadratic Unconstrained Binary Optimization (QUBO) models~\cite{mcgeoch2014adiabatic}.  Given a vector of $n$ binary variables as $\mathbf{x}=(x_1,x_2,\dots,x_{n})\in\{0,1\}^{n}$, a QUBO model is represented as $H_{qubo}(\mathbf{x})= \mathbf{x}^{T}\mathbf{Q}\mathbf{x}$, where $\{0,1\}^{n}$ is a set of $n$ binary values, and $\mathbf{Q}$ is an $n \times n$ matrix that can be chosen to be upper-diagonal. Therefore, $H_{qubo}(\mathbf{x})$ can be reformulated as (\ref{eq-Hqubo}).
\begin{equation}
\label{eq-Hqubo}
  H_{qubo}(\mathbf{x})=\sum_i\mathbf{Q}_{i,i}x_i+\sum_{i<j}\mathbf{Q}_{i,j}x_ix_j.  
\end{equation}
\noindent The diagonal terms $\mathbf{Q}_{i,i}$ are the linear coefficients acting as the external forces, and the off-diagonal terms $\mathbf{Q}_{i,j}$ are the quadratic coefficients for the internal forces~\cite{mcgeoch2014adiabatic}.

\section{Quantum Stereo Matching}
\label{Quantum_Stereo_Matching}
We introduce an equivalent QUBO model to the global Stereo Matching minimization problem (\ref{eq_f}) and provide proof of its validity. Our idea draws inspiration from the approach employed by the D-Wave Ocean SDK when handling discrete objective functions \cite{DWave_onehot}. We first allocate $|D|$ binary variables to each pixel $(i,j)\in P$, where $|D|$ is the number of elements in $D$,  $0\leq i \leq n-1$, and $0\leq j \leq m-1$. Therefore, we define $\mathbf{x}\in\{0,1\}^{nm|D|}$ as a vector of $nm|D|$ binary variables  such that $\mathbf{x}=(x_{i,j,d})$ for all  $(i,j)\in P$ and $d\in D$.  Let our QUBO model be defined as (\ref{eq_dqm_qubo}).
\begin{align}
    \label{eq_dqm_qubo}
    H(\mathbf{x})&= \alpha\sum_{(i,j)\in P}\left(1-\sum_{d\in D}x_{i,j,d}\right)^2 + \sum_{(i,j)\in P}\sum_{d\in D}\theta_{\{i,j\}}(d)x_{i,j,d} \\\nonumber
    &+ \lambda\sum_{\{(i,j),(i',j')\}\in N}\sum_{d_1\in D}\sum_{d_2\in D}\delta(d_1,d_2)x_{i,j,d_1}x_{i',j',d_2},
\end{align}
\noindent where $\alpha>\left(\sum_{(i,j)\in P}\max\{\theta_{\{i,j\}}(d)\mid d\in D\}\right)+\lambda|N|$, and $|N|$ is the number of elements in $N$. We set $\mathbf{x}^* =\argmin_{\mathbf{x}} H(\mathbf{x})$ and define a vector of $nm$ integer values as $\mathbf{w}^* =(w^*_{i,j})_{(i,j)\in P}$, where $w^*_{i,j}=d$ if $x^*_{i,j,d} =1$. Then, $\mathbf{w}^*$ minimizes the global Stereo Matching energy function (\ref{eq_f}). 

\subsection*{Proof of Correctness}
Eq.~\ref{eq_dqm_qubo} has three parts. The first part guarantees each pixel is assigned a unique disparity value from $D$. The second calculates the cost of the assigned disparity values to the pixels. The third part encodes the defined contextual constraint. 
\begin{Definition}
\label{def_dqm_feasible}
$\mathbf{x}$ is called feasible if and only if  $\sum_{d\in D} x_{i,j,d} = 1$ for all pixels $(i,j)\in P$. We denote a feasible $\mathbf{x}$ by $\mathbf{x}'$. 
\end{Definition}
\noindent Definition~\ref{def_dqm_feasible} states that given a pixel $(i,j)\in P$, its corresponding vector of binary variables $(x'_{i,j,d_{min}}, \dots, x'_{i,j,d_{max}})$ has only one value of ``1" in its values, making it possible to label each pixel uniquely by a disparity $d\in D$. Hence, the allocated disparity to a pixel $(i,j)\in P$ is $d$ if $x'_{i,j,d}=1$.
\begin{Definition}
\label{def_dqm_labeling}
Given $\mathbf{x}'$, the corresponding integer vector $\mathbf{w}'=(w'_{i,j})_{(i,j)\in P}$ is called a \emph{labeling}, where $w'_{i,j}=d$ if $x'_{i,j,d}=1$.
\end{Definition}
\begin{Lemma}
    \label{lemma_dqm_f}
    Given a feasible $\mathbf{x}'$ and its corresponding labeling $\mathbf{w}'$, the equality $H(\mathbf{x}') = F(\mathbf{w}')$ holds, where $F$ is the global Stereo Matching energy function in (\ref{eq_f}).
\end{Lemma}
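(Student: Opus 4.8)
The plan is to prove the equality by decomposing $H(\mathbf{x}')$ into its three constituent parts (as labeled in Eq.~\ref{eq_dqm_qubo}) and showing that, under the feasibility assumption of Definition~\ref{def_dqm_feasible}, each part reduces to the corresponding term of $F(\mathbf{w}')$ in Eq.~\ref{eq_f}. The key observation throughout is that feasibility forces the inner indicator sums to collapse: for every pixel $(i,j)\in P$, exactly one of the binary variables $x'_{i,j,d}$ equals $1$ (namely $x'_{i,j,w'_{i,j}}=1$ by Definition~\ref{def_dqm_labeling}) while all others vanish.

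First I would handle the penalty (first) part. Since $\sum_{d\in D}x'_{i,j,d}=1$ for every pixel by feasibility, each summand $\bigl(1-\sum_{d\in D}x'_{i,j,d}\bigr)^2$ equals $0$, so this entire part vanishes regardless of the value of $\alpha$. This is precisely why the $\alpha$-weighted term, though crucial for enforcing feasibility in the global minimization, contributes nothing once we restrict attention to a feasible $\mathbf{x}'$.

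Next I would simplify the data (second) part. For a fixed pixel $(i,j)$, the inner sum $\sum_{d\in D}\theta_{\{i,j\}}(d)x'_{i,j,d}$ retains only the term where $x'_{i,j,d}=1$, namely $d=w'_{i,j}$, so it collapses to $\theta_{\{i,j\}}(w'_{i,j})$; summing over $P$ reproduces exactly the first term of $F$. For the smoothness (third) part the argument is analogous but involves a product of two indicators: for each neighboring pair $\{(i,j),(i',j')\}\in N$ the double sum $\sum_{d_1\in D}\sum_{d_2\in D}\delta(d_1,d_2)x'_{i,j,d_1}x'_{i',j',d_2}$ keeps only the single term in which both $x'_{i,j,d_1}=1$ and $x'_{i',j',d_2}=1$, i.e.\ $d_1=w'_{i,j}$ and $d_2=w'_{i',j'}$, reducing it to $\delta(w'_{i,j},w'_{i',j'})$. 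Weighting by $\lambda$ and summing over $N$ yields the second term of $F$, and adding the three simplified parts gives $H(\mathbf{x}')=F(\mathbf{w}')$.

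The calculation is essentially routine substitution, so I do not anticipate a genuine obstacle; the only point demanding care is the smoothness term, where one must verify that the product $x'_{i,j,d_1}x'_{i',j',d_2}$ selects a unique pair $(d_1,d_2)$. This follows from applying the single-support property of feasibility independently to each of the two pixels, but it is worth stating explicitly, since it is exactly where the quadratic structure of the QUBO meets the pairwise structure of the penalty function $\delta$.
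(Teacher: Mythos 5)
Your proof is correct and follows essentially the same route as the paper's: the first ($\alpha$-weighted) part vanishes by feasibility, the data part collapses to $\sum_{(i,j)\in P}\theta_{\{i,j\}}(w'_{i,j})$ via the single-support property, and the smoothness part collapses to $\lambda\sum_{\{(i,j),(i',j')\}\in N}\delta(w'_{i,j},w'_{i',j'})$ by applying that property to each pixel of the pair. Your explicit remark that the product $x'_{i,j,d_1}x'_{i',j',d_2}$ selects a unique pair $(d_1,d_2)$ is exactly the point the paper's third bullet makes.
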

\begin{proof} Considering $H(\mathbf{x}')$ in (\ref{eq_dqm_qubo}), 
    \begin{itemize}
        \item Since $\mathbf{x}'$ is feasible, $\sum_{d\in D} x'_{i,j,d} = 1$ for all pixels $(i,j)\in P$ by Definition~\ref{def_dqm_feasible}. Therefore, we have
        \[
        \alpha\sum_{(i,j)\in P}\left(1-\sum_{d\in D}x'_{i,j,d}\right)^2 = \alpha\sum_{(i,j)\in P}\left(1-1\right)^2 = 0. 
        \]
        \item Given a pixel $(i,j)\in P$, only one variable in the vector $(x'_{i,j,d_{min}},\dots, x'_{i,j,d_{max}})$ is one, and all the others are zero. This non-zero variable is $x'_{i,j,w'_{i,j}}$ by Definition~\ref{def_dqm_labeling}. Therefore, we have
        \begin{align*}
        \sum_{(i,j)\in P}\sum_{d\in D}\theta_{\{i,j\}}(d)x'_{i,j,d} &= \sum_{(i,j)\in P}\theta_{\{i,j\}}(w'_{i,j})x'_{i,j,w'_{i,j}}= \sum_{(i,j)\in P}\theta_{\{i,j\}}(w'_{i,j}).
        \end{align*}
        \item Given $\{(i,j),(i',j')\}\in N$, the two corresponding vectors of binary variables are 
        \begin{itemize}
            \item $(i,j): (x_{i,j,d_{min}},\dots, x_{i,j,d_{max}})$
            \item $(i',j'): (x'_{i',j',d_{min}},\dots, x'_{i',j',d_{max}})$
        \end{itemize}
         Since $\mathbf{x}'$ is feasible, only one of the variables in each vector is one, and the others are zero. These variables are $x'_{i,j,w'_{i,j}}$ and $x'_{i',j',w'_{i'j'}}$, respectively, by Definition~\ref{def_dqm_labeling}. Thus, we can write

        \begin{align*}
        \tiny
        &\lambda\sum_{\{(i,j),(i',j')\}\in N}\sum_{d_1\in D}\sum_{d_2\in D}\delta(d_1,d_2)x'_{i,j,d_1}x'_{i',j',d_2} \\
        &= \lambda\sum_{\{(i,j),(i',j')\}\in N}\delta(w'_{i,j},w'_{i'j'})x'_{i,j,w'_{i,j}}x'_{i',j',w'_{i'j'}}\\
        &= \lambda\sum_{\{(i,j),(i',j')\}\in N}\delta(w'_{i,j},w'_{i'j'}). 
        \end{align*}
    \end{itemize}
    Therefore, we can rewrite $H(\mathbf{x}')$ as follows. 
    \begin{align*}
    H(\mathbf{x}') &= \sum_{(i,j)\in P}\theta_{\{i,j\}}(w'_{i,j}) + \lambda\sum_{\{(i,j),(i',j')\}\in N}\delta(w'_{i,j},w'_{i'j'}) = F(\mathbf{w}').
    \end{align*}
\end{proof}
\begin{Lemma}
\label{lemma_xsfeasible}
Let $\mathbf{x}^* = \argmin_{\mathbf{x}}H(\mathbf{x})$. $\mathbf{x}^*$ is feasible. 
\end{Lemma}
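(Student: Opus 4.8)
The plan is to argue by contradiction, exploiting the fact that $\alpha$ was chosen large enough to make any infeasibility strictly more expensive than the worst feasible configuration. First I would record two elementary facts. For every $\mathbf{x}$ the data term and the smoothness term of $H$ are nonnegative, since $\theta_{\{i,j\}}\geq 0$, $\delta\geq 0$, $\lambda>0$, and the binary variables are nonnegative. Moreover, for each pixel $(i,j)\in P$ the quantity $\sum_{d\in D}x_{i,j,d}$ is a nonnegative integer, so $\left(1-\sum_{d\in D}x_{i,j,d}\right)^2$ is an integer that equals $0$ exactly when the sum is $1$ and is at least $1$ otherwise.

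Next I would suppose, for contradiction, that the minimizer $\mathbf{x}^*$ is infeasible. By Definition~\ref{def_dqm_feasible} there is then at least one pixel $(i,j)\in P$ with $\sum_{d\in D}x^*_{i,j,d}\neq 1$, so the corresponding squared term contributes at least $1$ to the first sum. Discarding the remaining (nonnegative) pixels in the first sum, together with the two (nonnegative) remaining terms of $H$, yields the lower bound $H(\mathbf{x}^*)\geq\alpha$.

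Then I would produce a feasible competitor and bound its energy from above. Any one-hot assignment that sets exactly one $x'_{i,j,d}=1$ per pixel is feasible, so such $\mathbf{x}'$ exist. By Lemma~\ref{lemma_dqm_f} its energy equals $F(\mathbf{w}')$ for the associated labeling $\mathbf{w}'$. Bounding each matching cost by $\max\{\theta_{\{i,j\}}(d)\mid d\in D\}$ and each penalty $\delta$ by $1$ gives $H(\mathbf{x}')=F(\mathbf{w}')\leq\sum_{(i,j)\in P}\max\{\theta_{\{i,j\}}(d)\mid d\in D\}+\lambda|N|<\alpha$, where the final strict inequality is precisely the defining property of $\alpha$.

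Combining the two bounds yields $H(\mathbf{x}')<\alpha\leq H(\mathbf{x}^*)$, contradicting the assumption that $\mathbf{x}^*$ minimizes $H$; hence $\mathbf{x}^*$ must be feasible. The only delicate point is the integrality observation underlying the lower bound: it is essential that $\left(1-\sum_{d\in D}x_{i,j,d}\right)^2\geq 1$ for every infeasible pixel, so that a single violation already forces a penalty of at least $\alpha$. The rest is routine sign-tracking to confirm that the discarded nonnegative terms cannot offset this gap.
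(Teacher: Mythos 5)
Your proof is correct and follows essentially the same contradiction argument as the paper: a single infeasible pixel forces a penalty of at least $\alpha$, while any feasible assignment costs at most $\sum_{(i,j)\in P}\max\{\theta_{\{i,j\}}(d)\mid d\in D\}+\lambda|N|<\alpha$. If anything, your explicit integrality observation that $\left(1-\sum_{d\in D}x_{i,j,d}\right)^2\geq 1$ for an infeasible pixel is slightly more careful than the paper's version, which only states that the one-hot penalty term is ``non-zero and non-negative'' before concluding $H(\mathbf{x}^*)>\alpha$'s threshold.
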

\begin{proof}
    For ease of reference, we rewrite $H(\mathbf{x})$ as follows:
    \[
    H(\mathbf{x}) = \alpha \mathcal{A}(\mathbf{x}) + \mathcal{B}(\mathbf{x}),
    \]
    \noindent where
    \begin{align*}
        \mathcal{A}(\mathbf{x}) &= \sum_{(i,j)\in P}\left(1-\sum_{d\in D}x_{i,j,d}\right)^2\\
        \mathcal{B}(\mathbf{x}) &=  \sum_{(i,j)\in P}\sum_{d\in D}\theta_{\{i,j\}}(d)x_{i,j,d} + \lambda\sum_{\{(i,j),(i',j')\}\in N}\sum_{d_1\in D}\sum_{d_2\in D}\delta(d_1,d_2)x_{i,j,d_1}x_{i',j',d_2}.
    \end{align*}
    Towards a contradiction, suppose that $\mathbf{x}^*$ is not feasible. In this case, $\mathcal{A}(\mathbf{x}^*)\neq 0$ and it is non-negative. Therefore,  
    \begin{equation}
    \label{eq_proof_1}
     H(\mathbf{x}^*) = \alpha\mathcal{A}(\mathbf{x}^*) + \mathcal{B}(\mathbf{x}^*).
    \end{equation}
    \noindent Given a feasible $\mathbf{x}'$, $\mathcal{A}(\mathbf{x}') = 0$, and we have
    \begin{equation}
    \label{eq_proof_2}
        H(\mathbf{x}') = \mathcal{B}(\mathbf{x}').
    \end{equation}
     
    \noindent Since $\mathbf{x}'$ is feasible, $\mathcal{B}(\mathbf{x}')$ 
    adds penalty values up to a maximum of $$\left(\sum_{(i,j)\in P}\max\{\theta_{i,j}(d) \mid d\in D\}\right)+\lambda|N|.$$ Considering (\ref{eq_proof_2}), we have
    \begin{equation}
    \label{eq_proof_3}
        H(\mathbf{x}') \leq \left(\sum_{(i,j)\in P}\max\{\theta_{i,j}(d) \mid d\in D\}\right)+\lambda|N|.
    \end{equation}
    
    \noindent We know that $\alpha > \left(\sum_{(i,j)\in P}\max\{\theta_{i,j}(d) \mid d\in D\}\right)+\lambda|N|$, $\mathcal{A}(\mathbf{x}^*)$ is non-zero and non-negative, and $\mathcal{B}(\mathbf{x}^*)$ is non-negative. Considering (\ref{eq_proof_1}), we can write
    \begin{align}
        \label{eq_proof_4}
         H(\mathbf{x}^*)&= \alpha \mathcal{A}(\mathbf{x}^*) + \mathcal{B}(\mathbf{x}^*)> \left(\sum_{(i,j)\in P}\max\{\theta_{i,j}(d) \mid d\in D\}\right)+\lambda|N|.
    \end{align}
    \noindent The following statement is true by (\ref{eq_proof_3}) and (\ref{eq_proof_4}): $H(\mathbf{x}')< H(\mathbf{x}*)$, which is a contradiction. Therefore, $\mathbf{x}^*$ is feasible. 
\end{proof}
\begin{Theorem}
\label{theroem_dqm_qubo}
  Given $\mathbf{w}^*$ as the corresponding labeling of $\mathbf{x}^*$, $\mathbf{w}^*$ minimizes the global Stereo Matching energy function $F$ defined in (\ref{eq_f}).    
\end{Theorem}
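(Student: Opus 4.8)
The plan is to combine the two lemmas via a contradiction argument, exploiting the correspondence between feasible binary vectors and labelings in \emph{both} directions. First I would invoke Lemma~\ref{lemma_xsfeasible} to conclude that $\mathbf{x}^*$ is feasible, so that its corresponding labeling $\mathbf{w}^*$ is well defined in the sense of Definition~\ref{def_dqm_labeling}. Lemma~\ref{lemma_dqm_f}, applied to the feasible vector $\mathbf{x}^*$, then immediately gives the energy equality $H(\mathbf{x}^*) = F(\mathbf{w}^*)$.

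Next I would argue by contradiction: suppose $\mathbf{w}^*$ does \emph{not} minimize $F$, so that there exists some labeling $\tilde{\mathbf{w}} = (\tilde{w}_{i,j})_{(i,j)\in P}$ with $F(\tilde{\mathbf{w}}) < F(\mathbf{w}^*)$. The key step is to pass from this arbitrary labeling back to a feasible binary vector. Every labeling $\tilde{\mathbf{w}}$ induces a vector $\tilde{\mathbf{x}}$ by one-hot encoding, setting $\tilde{x}_{i,j,d} = 1$ when $d = \tilde{w}_{i,j}$ and $\tilde{x}_{i,j,d} = 0$ otherwise. This $\tilde{\mathbf{x}}$ satisfies $\sum_{d\in D}\tilde{x}_{i,j,d} = 1$ for every pixel $(i,j)\in P$, hence is feasible by Definition~\ref{def_dqm_feasible}, and its corresponding labeling is $\tilde{\mathbf{w}}$ itself. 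Applying Lemma~\ref{lemma_dqm_f} to this feasible vector yields $H(\tilde{\mathbf{x}}) = F(\tilde{\mathbf{w}})$.

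Chaining the equalities then produces $H(\tilde{\mathbf{x}}) = F(\tilde{\mathbf{w}}) < F(\mathbf{w}^*) = H(\mathbf{x}^*)$, which contradicts the defining property $\mathbf{x}^* = \argmin_{\mathbf{x}} H(\mathbf{x})$, since it exhibits a binary vector on which $H$ takes a strictly smaller value than at the claimed minimizer. Hence no such $\tilde{\mathbf{w}}$ can exist, and $\mathbf{w}^*$ minimizes $F$.

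The step I expect to be the crux is the \emph{reverse} correspondence in the second paragraph: one must observe that the minimization of $F$ ranges over exactly those labelings realizable as feasible binary vectors. Lemma~\ref{lemma_dqm_f} only supplies the direction from a feasible $\mathbf{x}'$ to its labeling $\mathbf{w}'$; the theorem additionally requires that the one-hot map is \emph{onto} the set of all labelings, so that minimizing $H$ over all binary vectors (feasible or not) cannot beat minimizing $F$ over labelings. Lemma~\ref{lemma_xsfeasible} is what closes this gap, guaranteeing that the unconstrained minimizer of $H$ is automatically feasible, so the large penalty weight $\alpha$ forces the two optimization problems to share the same optimal value rather than allowing an infeasible configuration to spuriously lower $H$.
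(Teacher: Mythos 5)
Your proof is correct and follows essentially the same route as the paper: a contradiction argument that combines Lemma~\ref{lemma_xsfeasible} (feasibility of $\mathbf{x}^*$) with Lemma~\ref{lemma_dqm_f} (the energy equality $H=F$ on feasible vectors). If anything, you are more careful than the paper, which simply asserts that a better labeling ``must'' come from some feasible $\mathbf{x}'$, whereas you explicitly construct the one-hot encoding $\tilde{\mathbf{x}}$ and justify why every labeling is realized by a feasible binary vector.
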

\begin{proof}
    Towards a contradiction, we suppose that $\mathbf{w}^*$ does not minimize $F$. In this case, there must be a feasible $\mathbf{x'}$ for which its corresponding labeling $\mathbf{w}'$ minimizes $F$. Therefore, we have $F(\mathbf{w}') < F(\mathbf{w}^*)$. Since $\mathbf{x}'$ and $\mathbf{x}^*$ are both feasible (see Lemma~\ref{lemma_xsfeasible}), we have $H(\mathbf{x}')<H(\mathbf{x}^*)$ by Lemma~\ref{lemma_dqm_f}. This is a contradiction because in this case $\mathbf{x}^*\neq \argmin_{\mathbf{x}}H(\mathbf{x})$. 
\end{proof}

 Eq.~\ref{eq_dqm_qubo} is versatile and can be adapted for a variety of CV labeling problems by replacing $P$ with any desired set of image features, and replacing $D$ with an appropriate set of labels depending on the application. Then, the first and second terms in (\ref{eq_f}) can be defined accordingly. The modified QUBO remains consistent with the QUBO model described in  (\ref{eq_dqm_qubo}). This adaptability showcases the broader applicability of the model, making it a flexible tool for addressing a range of CV labeling challenges.

\section{Evaluation and Experimental Results}
\label{Expriments}
\subsection{Qubit Complexity}
D-Wave quantum computers have showcased remarkable potential in solving optimization problems. However, one significant challenge they face is the limited availability of qubits. D-Wave QPUs employ Quantum Annealing to find (estimate) the global minimum of a QUBO/Ising model. While effective for specific problem types, this approach often requires a large number of qubits, and the current generation of D-Wave QPUs have constraints on the number of qubits that can be utilized. Consequently, proposing a QUBO model with fewer variables is paramount as it addresses the current limitations in qubit availability, enables the solution of larger and more complex problems, widens access to Quantum Annealing, and enhances the robustness and practicality of Quantum Annealing technology in solving real-world optimization challenges. Recall $P$ as the set of pixels for a pair of stereo images with size $n\times m$, and $D$ as the set of possible disparities values, where $|P|=nm$ and $|D|=k$ denotes the number of elements in $P$ and $D$, respectively. Given the defined vector of binary variables in (\ref{eq_dqm_qubo}), the number of QUBO variables in our general-purpose quantum model is $nmk$. Table~\ref{tab_comparison_qubit_complexity} compares our quantum model with the existing labeling-based quantum solutions that can be utilized for Stereo Matching. Cruz-Santos et al.~\cite{cruz2018qubo} and Heidari et al.\cite{heidari2021improved} models are based on the minimum \textit{st} cut problem, and Heidari et al.\cite{heidari2022equivalent} approach reduces a CV labeling problem to the minimum multi-way cut problem. 
\begin{table}[ht]
\centering
\caption{Qubit-complexity comparison of the proposed quantum Stereo Matching models.}
\label{tab_comparison_qubit_complexity}
\begin{tabular}{|c|c|}
\hline
\textbf{Model} & \textbf{Qubit complexity} \\
\hline
\cite{cruz2018qubo} & $7nmk+9nm-2nk-2mk-2n-2m+2$\\
\hline
\cite{heidari2021improved} & $nmk+nm+2$\\
\hline
\cite{heidari2022equivalent} & $nmk+k^2$ \\
\hline
Ours& $nmk$ \\
\hline
\end{tabular}
\end{table}
\subsection{Experimental Results}
Once a QUBO model is prepared, it needs to be embedded within the QPU hardware architecture for the minimization process. Embedding is the crucial step of mapping QUBO variables onto the available qubits on the hardware. Embedding can be challenging due to the relatively limited qubits and the restricted hardware connectivity. Consequently, it is common to chain two or more qubits together on the QPU to represent a single QUBO variable. While many real-world applications can successfully run on the D-Wave QPUs, there are cases where the input data is too large to be directly solved by Quantum Annealing, primarily because of the qubit scarcity. To overcome this size limitation, hybrid solvers combine classical and quantum approaches for problem-solving. D-Wave hybrid solvers can handle problems with a significantly higher number of variables than those directly solvable by a D-Wave QPU, offering a reliable estimate of the future accuracy of D-Wave QPUs once more qubits become available on the hardware. As a proof of concept, we utilize the Constrained Quadratic Model (CQM) D-Wave hybrid solver to minimize the proposed quantum Stereo Matching model. This solver has the capability of handling up to 500,000 QUBO variables, but it still poses restrictions on the size of the input stereo images and the number of disparities that can be processed. Therefore, we had to use cropped pairs of stereo images to analyze the performance of the Stereo Matching quantum model. We chose four pairs of stereo images from 2001-Middlebury image datasets~\cite{middlebury}, namely \textit{Venus} and \textit{Bull}, \textit{Sawtooth}, and \textit{Barn}. We could not use the latest stereo datasets because of their high disparity range. Given a pair of cropped regions from recent Middlebury stereo datasets, the majority of regions in both cropped stereo images would be occluded due to a large disparity range, resulting in only a small portion of the scene being visible in both images. This makes them not suitable to evaluate our quantum model due to the simplicity of the defined global Stereo Matching energy function in (\ref{eq_f}). To identify a more common region of interest in both stereo images, we selected our pairs of stereo images from the 2001 Middlebury image dataset~\cite{middlebury}, as well as two ``natural" images (\textit{Tree} and \textit{Castle}) from the real world with low disparity ranges to incorporate complex scene structures into our prepared dataset. Figure~\ref{fig_our_dataset} illustrates our prepared stereo dataset with the corresponding ``ground truths." We did not have ground truths for the natural images since they were not created in controlled laboratory settings like 2001-Middlebury image datasets~\cite{middlebury}. Therefore, we incorporated a deep-learning-based model~\cite{li2022practical} to get fairly accurate disparity maps to be used as the corresponding ground truths. Note that we used the gray-scale versions of the shown stereo images. \\

We also selected the best-performing and state-of-the-art classical minimization algorithms commonly used in CV, so that we can compare our quantum model with the classical counterparts. The selected classical algorithms includes Swap move~\cite{boykov2001fast}, Expansion move~\cite{boykov2001fast}, Max product Loopy Belief Propagation (LBP)~\cite{felzenszwalb2006efficient},  and the improved Tree Re-weighted Message Passing (TRW-S)~\cite{kolmogorov2005convergent}. We utilized the Middlebury software framework for our classical implementations~\cite{middlebury_MRF}.
\begin{figure}
\centering
\begin{subfigure}{0.7\textwidth}
         \centering
         \includegraphics[width=\textwidth]{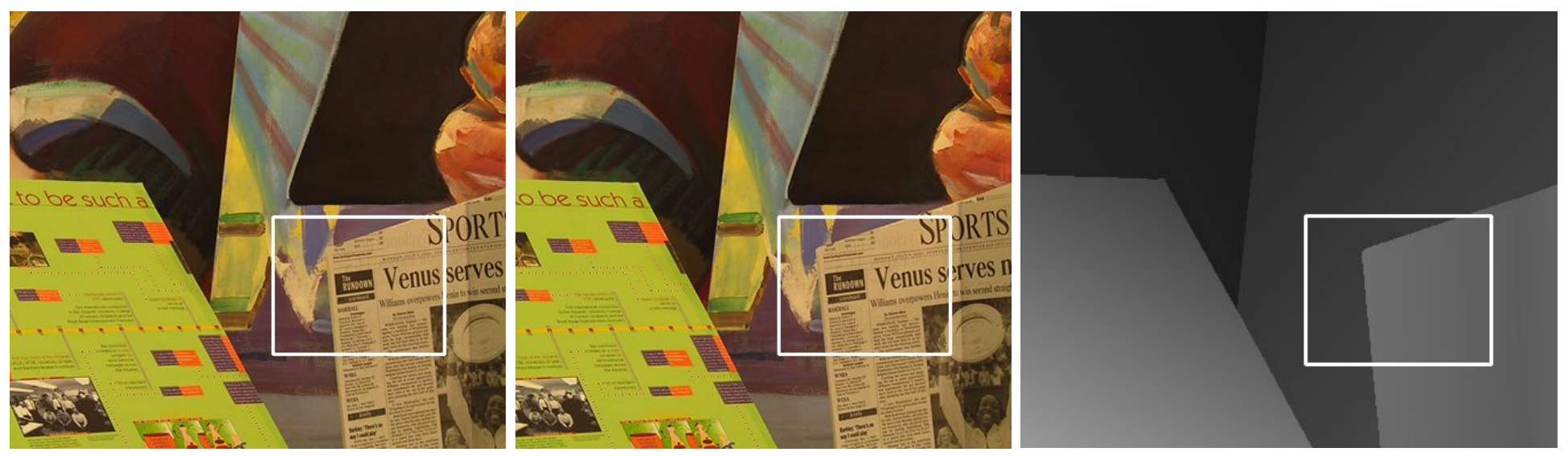}
         \caption{}
         \label{fig_our_dataset_venus}
     \end{subfigure}

\vfill
\begin{subfigure}{0.7\textwidth}
         \centering
         \includegraphics[width=\textwidth]{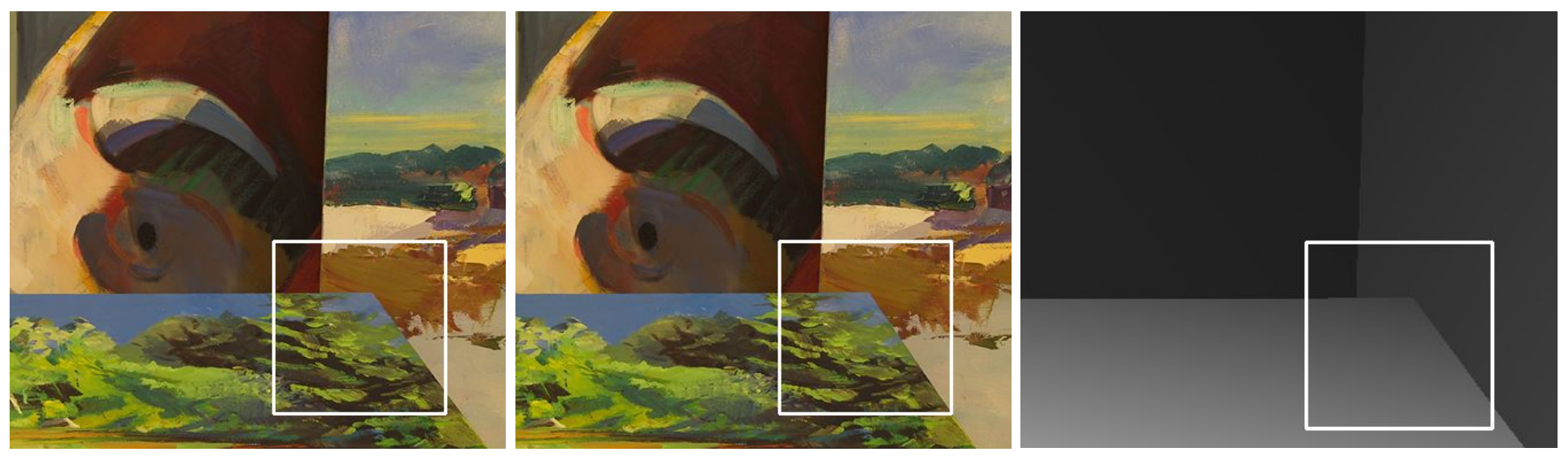}
         \caption{}
         \label{fig_our_dataset_bull}
\end{subfigure}
\vfill
\begin{subfigure}{0.7\textwidth}
         \centering
         \includegraphics[width=\textwidth]{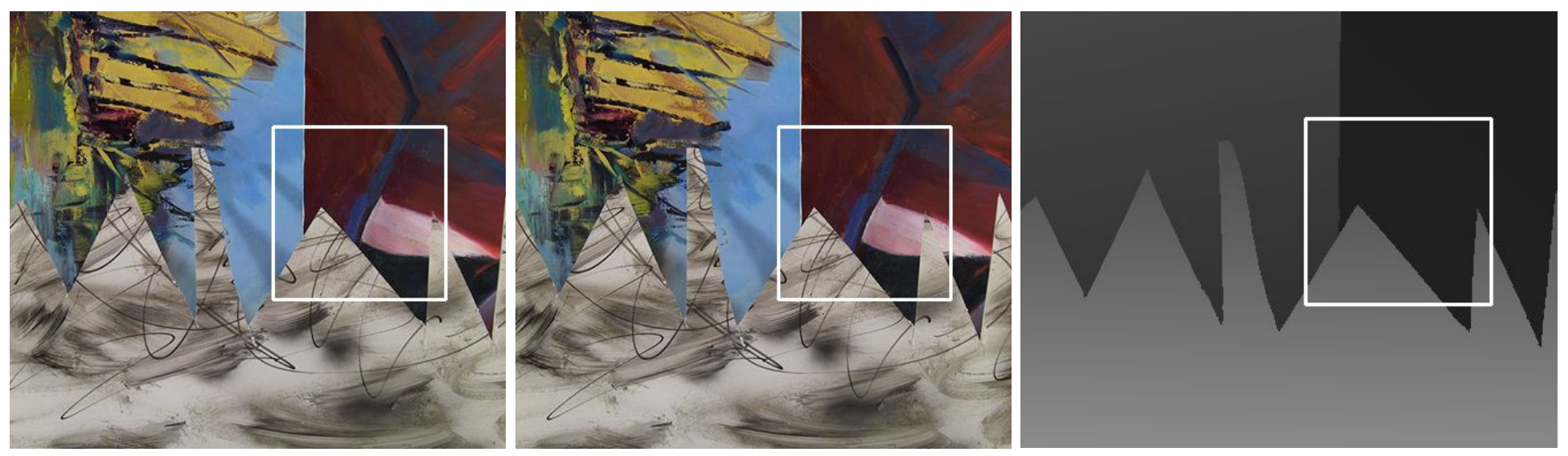}
         \caption{}
         \label{fig_our_dataset_sawtooth}
\end{subfigure}
\vfill
\begin{subfigure}{0.7\textwidth}
         \centering
         \includegraphics[width=\textwidth]{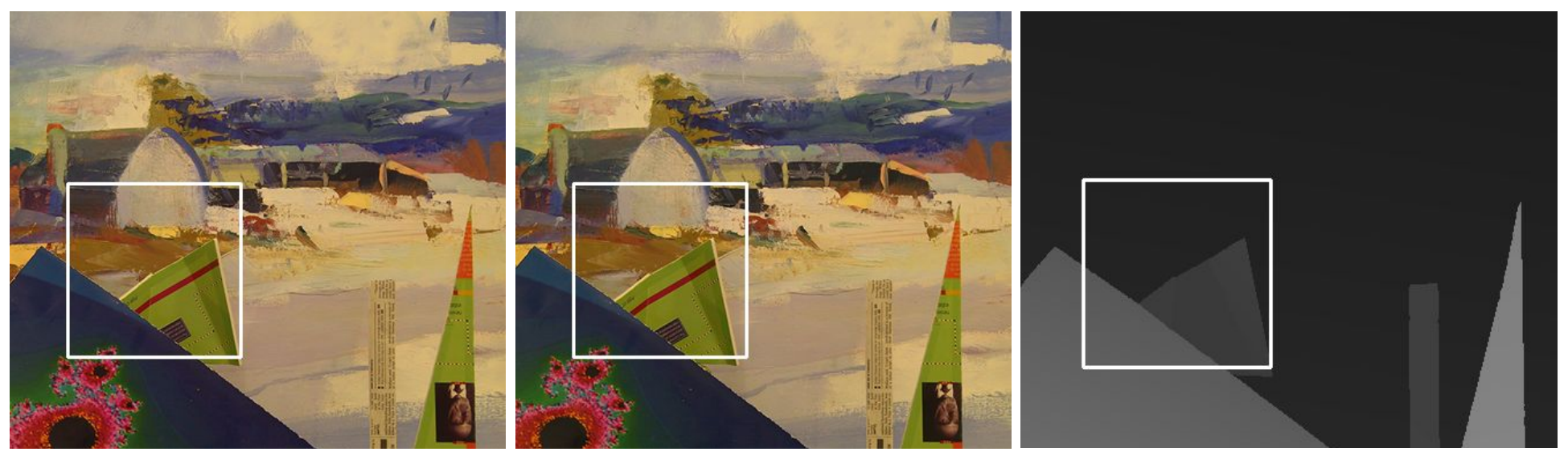}
         \caption{}
         \label{fig_our_dataset_barn}
\end{subfigure}
\vfill
\begin{subfigure}{0.7\textwidth}
         \centering
         \includegraphics[width=\textwidth]{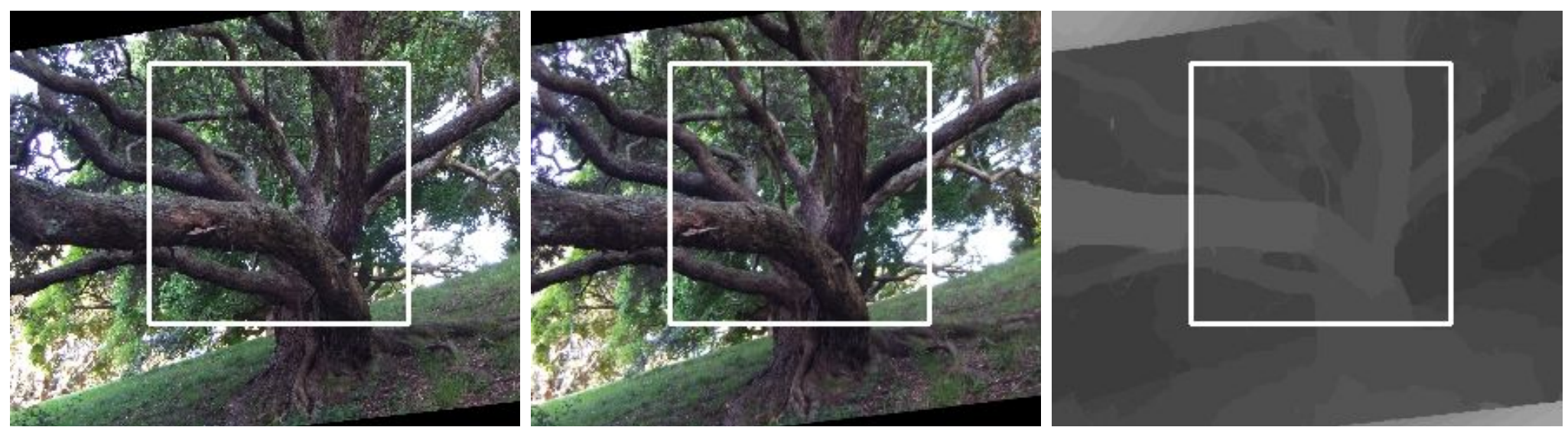}
         \caption{}
         \label{fig_our_dataset_tree}
\end{subfigure}
\vfill
\begin{subfigure}{0.7\textwidth}
         \centering
         \includegraphics[width=\textwidth]{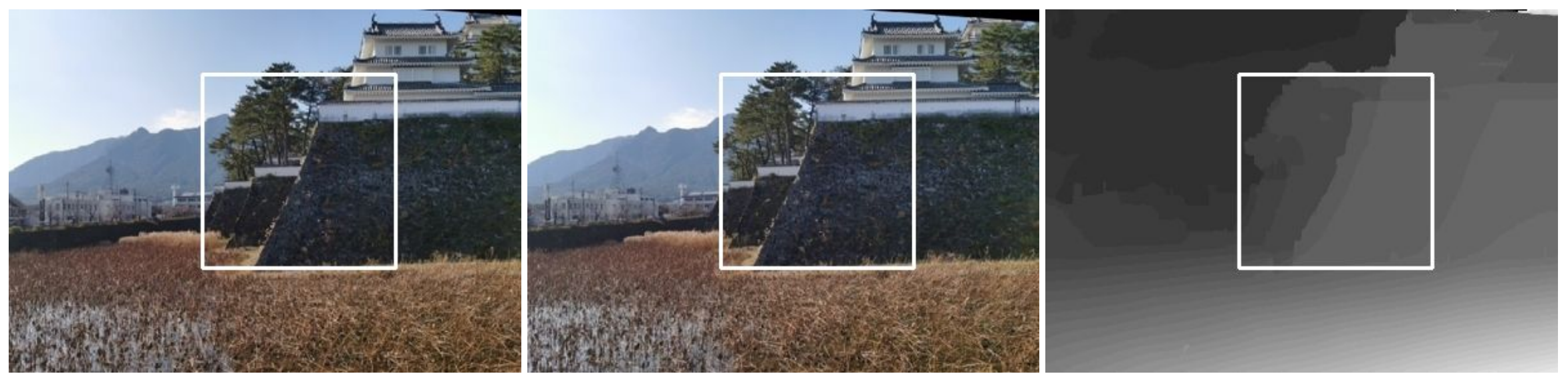}
         \caption{}
         \label{fig_our_dataset_castle}
\end{subfigure}
\caption{The prepared stereo datasets, (a) \textit{Venus}, (b) \textit{Bull}, (c) \textit{Sawtooth}, (d) \textit{Barn}, (c) \textit{Tree}, and (d) \textit{Castle}. In each row of images, we have the left stereo image, the right stereo image, and the corresponding ground truth for the left stereo image. The white squares show the cropped regions for our experiments. }
\label{fig_our_dataset}
\end{figure}
Considering the global Stereo Matching energy function defined in (\ref{eq_f}), we established an initial $\lambda=20$ for all the benchmark minimization algorithms to ensure a fair comparison. Figure~\ref{fig_disp_maps} shows the computed disparity maps by the benchmark minimization algorithms.

\begin{figure*}[t!]
\centering
\includegraphics[width=\textwidth]{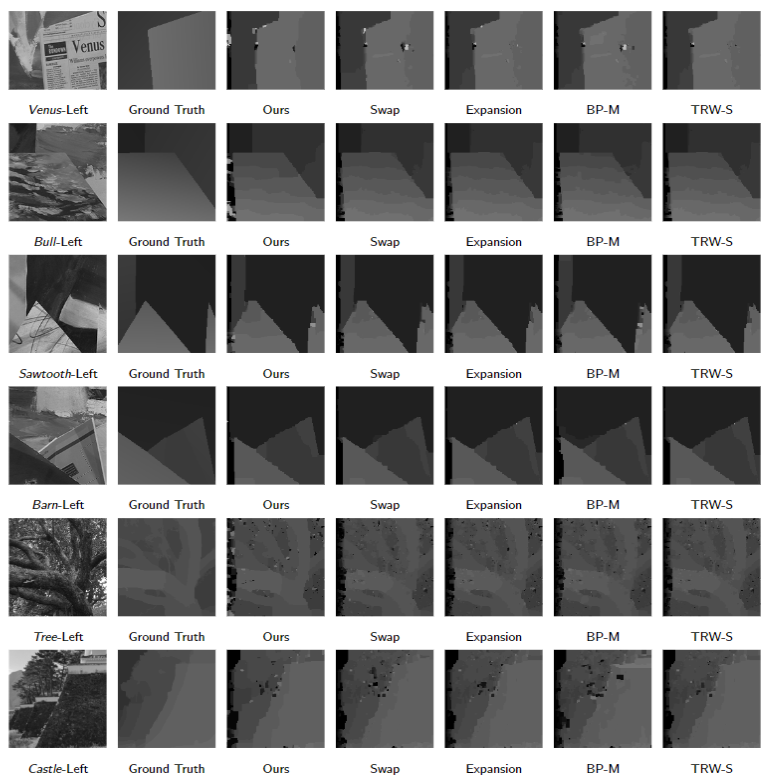}
\caption{Computed disparity maps by the benchmark minimization algorithms.}
\label{fig_disp_maps}
\end{figure*}

Next, we define two widely-used metrics~\cite{scharstein2002taxonomy} to evaluate the accuracy of our results in comparison to the corresponding ground truths: the root-mean-squared error ($rms$) and the percentage of mismatched pixels (\textit{bad-$\beta$}). Given a disparity map $\mathcal{D}$ and a ground truth $\mathcal{T}$ defined by $n\times m$ matrices of integers, $rms$ and \textit{bad-$\beta$} are defined as (\ref{eq_rms_b}). \\

\begin{equation}
\label{eq_rms_b}
    \hbox{\textit{rms}}=\sqrt{\frac{1}{nm}\sum_{i=1}^{n}\sum_{j=1}^{m} \left(\mathcal{D}_{i,j}-\mathcal{T}_{i,j}\right)^2},
\end{equation}
\begin{equation}
    \hbox{\textit{bad-$\beta$}}=\left(\frac{1}{nm}\sum_{i=1}^{n}\sum_{j=1}^{m} \left(|\mathcal{D}_{i,j}-\mathcal{T}_{i,j}| > \beta \right)\right) \times 100
\end{equation}
where $\beta\in \mathbb{R}^+$ is the disparity error tolerance. In the following evaluation, we set $\beta$ to $0.5$ and $1.0$, named \textit{bad-$0.5$} an \textit{bad-$1.0$}, respectively. Table.~\ref{tab_evaluation} compares the performance of each minimization algorithm based on the defined metrics. Given the cropped stereo images, the results suggest that the quantum model outperformed the classical counterparts on the \textit{Bull}, \textit{Sawtooth}. and \textit{Tree} datasets, and performed competitively on the \textit{Venus}, \textit{Barn}, and \textit{Castle} datasets. Our findings show that Quantum Annealing can offer promising results in CV applications compared to the state-of-the-art CV minimization inference algorithms. Due to the scarcity of available qubits on the current D-Wave QPUs, we were not able to use a pure Quantum Annealing minimization, and we used a D-Wave hybrid solver, which offers a reliable estimate of the future accuracy of D-Wave QPUs once more qubits become available on the hardware.
\begin{table}
\centering
\caption[Numerical evaluation for the prepared stereo dataset]{Numerical evaluation for the \textit{\textbf{Venus}} stereo dataset.}\label{tab_evaluation}
\begin{tabular}{|l|l|c|c|c|}
\hline
Dataset & \textbf{Method} &  \textbf{\textit{rms}} &  \textbf{\textit{bad-0.5} (\%)} & \textbf{\textit{ bad-1.0} (\%)}\\
\hline
\multirow{5}{*}{\textit{Venus}} & Ours & 2.25 &    {\color{blue}40.44} &    10.45 \\
\multirow{5}{*}{} & Swap & 2.09 &    47.57 &    10.23\\
\multirow{5}{*}{} & Expansion & 1.94 &    43.81 &     9.76\\
\multirow{5}{*}{} & BP-M & 1.96 &    47.09 &     {\color{blue}9.30}\\
\multirow{5}{*}{} & TRW-S & {\color{blue} 1.92} &    44.07 &     9.54\\
\hline
\hline
\multirow{5}{*}{\textit{Bull}} & Ours & {\color{blue}2.33} &    {\color{blue}36.07} &     {\color{blue}7.08} \\
\multirow{5}{*}{} &Swap & 2.38 &    37.43 &     7.40 \\
\multirow{5}{*}{} &Expansion & 2.38 &    37.28 &     7.36 \\
\multirow{5}{*}{} &BP-M & 2.39 &    37.42 &     7.25 \\
\multirow{5}{*}{} &TRW-S & 2.38 &    37.06 &     7.32 \\
\hline
\hline
\multirow{5}{*}{\textit{Sawtooth}} & Ours & {\color{blue}2.27} &    {\color{blue}22.54} &    {\color{blue}10.26} \\
\multirow{5}{*}{} &Swap & 2.44 &    22.76 &    10.27 \\
\multirow{5}{*}{} &Expansion & 2.41 &    22.85 &    10.44 \\
\multirow{5}{*}{} &BP-M & 2.41 &    23.59 &    10.30 \\
\multirow{5}{*}{} &TRW-S & 2.36 &    22.67 &    10.36\\
\hline
\hline
\multirow{5}{*}{\textit{Barn}} & Ours & 2.27 &    {\color{blue}14.37} &     7.41\\
\multirow{5}{*}{} &Swap & 2.23 &    16.11 &     7.51 \\
\multirow{5}{*}{} &Expansion & {\color{blue}2.21} &    16.09 &     7.55 \\
\multirow{5}{*}{} &BP-M & 2.38 &    20.21 &     8.25 \\
\multirow{5}{*}{} &TRW-S & 2.23 &    15.54 &    {\color{blue}7.33} \\
\hline
\hline
\multirow{5}{*}{\textit{Tree}} &Ours & {\color{blue}2.99} &    {\color{blue}24.99} &    {\color{blue}13.22} \\
\multirow{5}{*}{}&Swap & 3.32 &    33.85 &    14.39 \\
\multirow{5}{*}{}&Expansion & 3.27 &    31.73 &    13.23 \\
\multirow{5}{*}{}&BP-M & 3.16 &    33.45 &    13.91\\
\multirow{5}{*}{}&TRW-S & 3.12 &    31.81 &    {\color{blue}13.22} \\
\hline
\hline
\multirow{5}{*}{\textit{Castle}}&Ours & {\color{blue}2.74} &    34.52 &    17.62    \\
\multirow{5}{*}{}&Swap & 2.83 &    {\color{blue}32.85} &    17.25    \\
\multirow{5}{*}{}&Expansion & 2.76 &    33.68 &    16.94 \\
\multirow{5}{*}{}&BP-M & 2.99 &    41.12 &    21.21 \\
\multirow{5}{*}{}&TRW-S & 2.66 &    33.36 &    {\color{blue}16.62} \\
\hline
\end{tabular}
\end{table}

Since our model (\ref{eq_dqm_qubo}) is a direct equivalent to the global Stereo Matching energy function (\ref{eq_f}), its energy solution can be compared with that of the iterative classical minimization algorithms. Figure~\ref{fig_energies} shows the energies of the solutions obtained by each minimization model over the provided stereo datasets. According to the findings, our approach demonstrated a capacity to obtain solutions of lower energy in comparison to the iterative classical minimization methods for each provided stereo dataset. This observation underscores the effectiveness of QUBOs when solved by D-Wave hybrid solvers. We do not provide a comparison in terms of running time, as the classical iterative minimization algorithms were significantly faster than the D-Wave hybrid solver when minimizing the corresponding QUBO models. The reason is because of the way that a D-Wave hybrid solver works. A D-Wave hybrid solver is based on the D-Wave Hybrid Solver Service (HSS). Once a QUBO is provided to the HSS, it activates one or more heuristic solvers that run in parallel, either on a CPU or a GPU platform, to identify high-quality solutions.  Each heuristic solver comprises a classical heuristic module that navigates the search space, and a quantum module is responsible for formulating quantum queries directed to the D-Wave Advantage QPU. Solutions retrieved from the QPU assist the heuristic modules in pinpointing more viable search space regions or refining the current solutions. Each heuristic solver forwards its top solution to the HSS solver. The HSS solver then determines the best solution from the collective outputs of the heuristic solvers and relays this optimal solution back to the user~\cite{DwaveDoc_hiybrid}. Therefore, the running time is not derived from a direct QPU minimization to be compared with the classical minimization methods. We used a D-Wave hybrid solver as proof of concept to show that Quantum Annealing is capable of being used in CV labeling problems once enough qubits are available on the QPU in future. 

\begin{figure}[t!]
\centering
\includegraphics[width=0.8\textwidth]{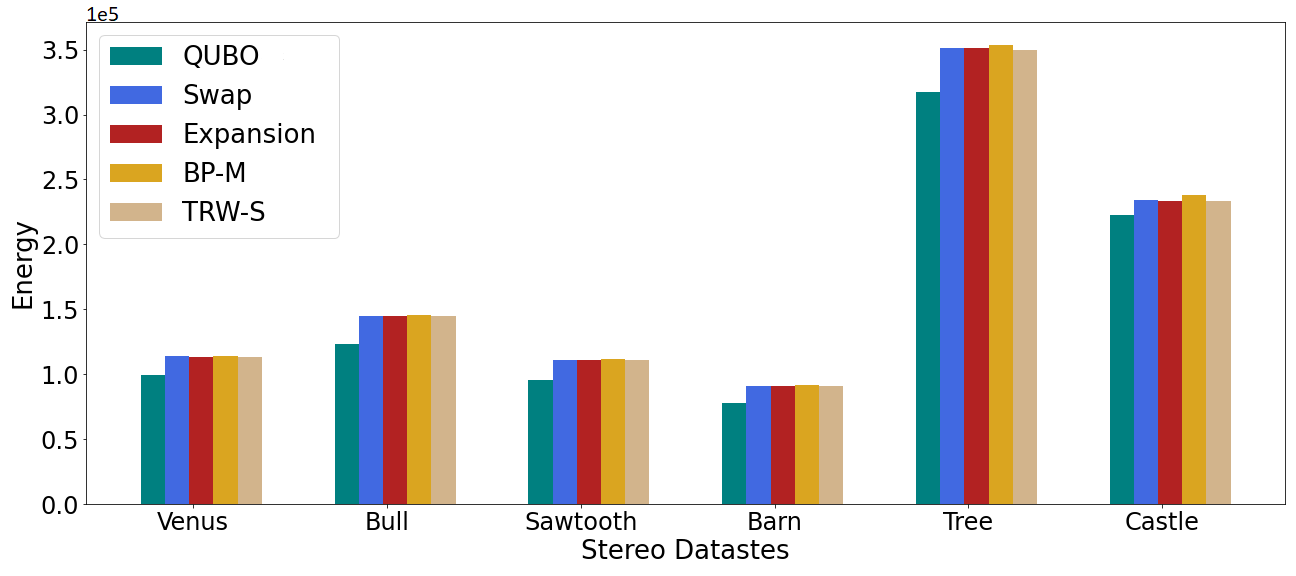}
\caption{A comparison between energies obtained by our model and the benchmarking classical minimization.}
\label{fig_energies}
\end{figure}

\section{Conclusion}
\label{Conclusion}
CV labelling algorithms play a pivotal role in the domain of low-level vision. For decades, it has been known that these problems can be elegantly formulated as discrete energy-minimization problems derived from probabilistic graphical models. Despite recent advances in inference algorithms, the resulting energy-minimization problems are generally viewed as intractable.  In this study, we presented a Quantum Annealing-based method for solving CV discrete optimization problems, specifically for Stereo Matching. However, our proposed quantum model is not limited to Stereo Matching and can be applied to various CV labeling problems such as Image Segmentation, Image Restoration, Image Registration, Optical Flow, Object Detection, and Image Inpainting. We provided proof of correctness to demonstrate the equivalence of the proposed quantum model to the original discrete minimization energy function. Due to the limited availability of qubits on the quantum hardware, we were not able to minimize the Stereo Matching energy function directly on the QPU. Instead, we utilized a D-Wave hybrid solver to show the feasibility of our proposed quantum model. Our results showed promising solutions with lower energies compared to the best classical minimization algorithms in the literature. When there are enough qubits available, it may be a subject for future research to determine if a quantum-based CV inference offers any advantages over classical minimization methods in terms of accuracy and speed.

\section*{Acknowledgements}
We thank Prof. Cristian Calude for his helpful discussions on
this paper.

\newpage

\clearpage
\setcounter{page}{1}

\section{Supplementary Materials: Example*}
\label{sec_example}

This part supplements our main paper by providing a simple example to show how (\ref{eq_dqm_qubo}) can be modeled and minimized via Quantum Annealing. (\ref{fig_dqm_stereo_example_a}) and (\ref{fig_dqm_stereo_example_a}) show a pair of $(3\times4)$-sized stereo images with $D=\{0,1\}$. The intensity values for the left and right images are shown on the pixels. The corresponding pixel coordinates are illustrated in (\ref{fig_dqm_stereo_example_c}). Without loss of generality, we ignore the first column of pixels in the left image since $d_{max}$ is $1$, and we would obtain negative coordinates to match this column in the right image. The main goal is to compute the disparity map allocated to the shown red square in (\ref{fig_dqm_stereo_example_a}). Eq.~\ref{fig_dqm_stereo_example_d} shows the ground truth disparity map.
   
     \begin{figure*}[t]
        \begin{subfigure}{0.25\textwidth}
                 \centering
                 \includegraphics[width=\textwidth]{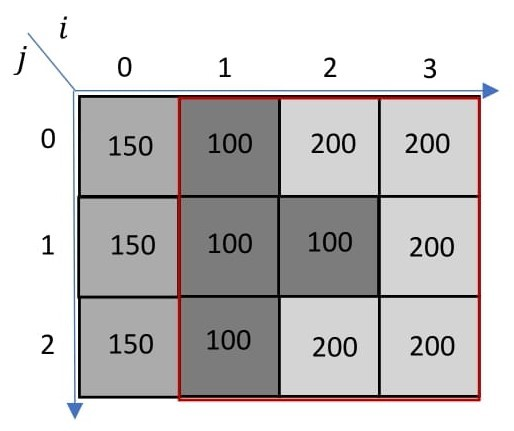}
                 \caption{}
                 \label{fig_dqm_stereo_example_a}
             \end{subfigure}
        \hspace*{0.2em}
        \begin{subfigure}{0.25\textwidth}
                 \centering
                 \includegraphics[width=\textwidth]{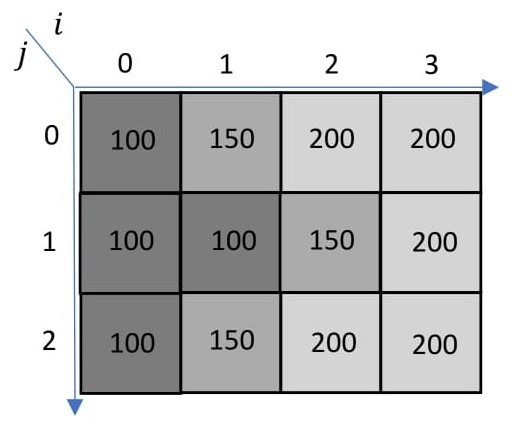}
                 \caption{}
                 \label{fig_dqm_stereo_example_b}
             \end{subfigure}
        \hspace*{0.2em}
        \begin{subfigure}{0.22\textwidth}
                 \centering
                 \includegraphics[width=\textwidth]{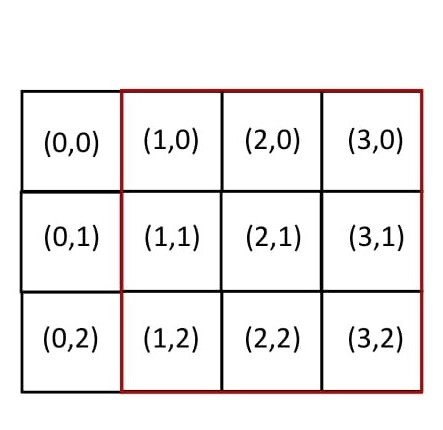}
                 \caption{}
                 \label{fig_dqm_stereo_example_c}
             \end{subfigure}  
        \hspace*{0.2em}
        \begin{subfigure}{0.17\textwidth}
                 \centering
                 \includegraphics[width=\textwidth]{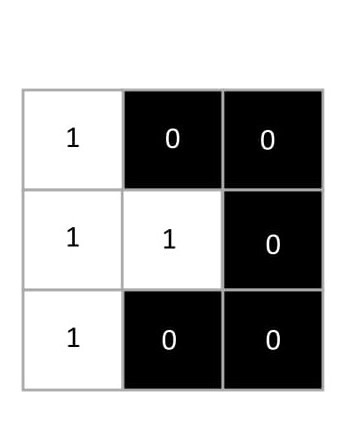}
                 \caption{}
                 \label{fig_dqm_stereo_example_d}
        \end{subfigure}     
        \caption {(a) the left stereo image, (b) the right stereo image, (c) the pixel coordinates, (d) the corresponding disparity map.}
        \label{fig_dqm_stereo_example}
    \end{figure*}
    
Considering (\ref{eq_P}) and (\ref{eq_N}), we first define $P$ and $N$ follows:
\begin{align*}
    P =& \{(1,0), (2,0), (3,0), (1,1), (2,1), (3,1), (1,2),(2,2), (3,2)\}
\end{align*}
\begin{align*}
    N =& \{ \{(1,0), (2,0)\}, \{(1,0), (1,1)\}, \{(2,0), (3,0)\},\\
    &\{(2,0), (2,1)\}, \{(3,0), (3,1)\}, \{(1,1), (2,1)\},\\
    & \{(1,1), (1,2)\}, \{(2,1), (3,1)\}, \{(2,1), (2,2)\},\\
    &\{(3,1), (3,2)\}, \{(1,2), (2,2)\}, \{(2,2), (3,2)\}\}.
\end{align*}
\noindent The numbers of pixels and disparities are $9$ and $2$, respectively. Therefore, We define a vector of $18$ binary variables as $\mathbf{x} = \{0,1\}^{18}$:
\begin{align*}
        \mathbf{x} = (&x_{1,0,0}, x_{1,0,1}, x_{2,0,0}, x_{2,0,1}, x_{3,0,0}, x_{3,0,1}, x_{1,1,0},\\
        &x_{1,1,1}, x_{2,1,0}, x_{2,1,1}, x_{3,1,0}, x_{3,1,1}, x_{1,2,0}, x_{1,2,1}, \\ &x_{2,2,0}, x_{2,2,1}, x_{3,2,0}, x_{3,2,1}).
\end{align*}
\noindent We set $\lambda=10$ and $\alpha=200$ by which we have $\alpha>\left(\sum_{(i,j)\in P}\max\{\theta_{i,j}(d)|d\in D\}\right)+\lambda|N|$. The QUBO model (\ref{eq_dqm_qubo}) is formulated as follows:
\begin{align*}
        H(\mathbf{x})&=
        200\sum_{(i,j)\in P}\left(1-\sum_{d\in D}x_{i,j,d}\right)^2 + \sum_{(i,j)\in P}\sum_{d\in D}\theta_{i,j}(d)x_{i,j,d} \\ &+10\sum_{\{(i,j),(i',j')\}\in N}\sum_{d_1\in D}\sum_{d_2\in D}\delta(d_1,d_2)x_{i,j,d_1}x_{i',j',d_2}.
\end{align*}
The QUBO model $H(\mathbf{x})$ has three terms denoted by $H_1$, $H_2$, and $H_3$ from left to right, respectively. The following shows each term's expansion separately. We then add them all at the end. We start with the first term denoted by $H_1$.
\begin{align*}
        H_1 (\mathbf{x}) = 200 (&-x_{1,0,0}-x_{1,0,1}+2x_{1,0,0}x_{1,0,1}+1\\
        &-x_{2,0,0}-x_{2,0,1}+2x_{2,0,0}x_{2,0,1}+1\\
        &-x_{3,0,0}-x_{3,0,1}+2x_{3,0,0}x_{3,0,1}+1\\
        &-x_{1,1,0}-x_{1,1,1}+2x_{1,1,0}x_{1,1,1}+1\\
        &-x_{2,1,0}-x_{2,1,1}+2x_{2,1,0}x_{2,1,1}+1\\
        &-x_{3,1,0}-x_{3,1,1}+2x_{3,1,0}x_{3,1,1}+1\\
        &-x_{1,2,0}-x_{1,2,1}+2x_{1,2,0}x_{1,2,1}+1\\
        &-x_{2,2,0}-x_{2,2,1}+2x_{2,2,0}x_{2,2,1}+1\\
        &-x_{3,2,0}-x_{3,2,1}+2x_{3,2,0}x_{3,2,1}+1)
\end{align*}
\noindent Next, we expand the second term as $H_2$.
 \begin{align*}
        H_2 (\mathbf{x})&=|I_l(1,0)-I_r(1,0)|x_{1,0,0}+|I_l(1,0)-I_r(0,0)|x_{1,0,1}\\
        &+ |I_l(2,0)-I_r(2,0)|x_{2,0,0}+|I_l(2,0)-I_r(1,0)|x_{2,0,1}\\
        &+ |I_l(3,0)-I_r(3,0)|x_{3,0,0}+|I_l(3,0)-I_r(2,0)|x_{3,0,1} \\
        &+ |I_l(1,1)-I_r(1,1)|x_{1,1,0}+|I_l(1,1)-I_r(0,1)|x_{1,1,1} \\
        &+ |I_l(2,1)-I_r(2,1)|x_{2,1,0}+|I_l(2,1)-I_r(1,1)|x_{2,1,1}\\
        &+|I_l(3,1)-I_r(3,1)|x_{3,1,0}+|I_l(3,1)-I_r(2,1)|x_{3,1,1}\\
        &+ |I_l(1,2)-I_r(1,2)|x_{1,2,0}+|I_l(1,2)-I_r(0,2)|x_{1,2,1}\\
        &+ |I_l(2,2)-I_r(2,2)|x_{2,2,0}+|I_l(2,2)-I_r(1,2)|x_{2,2,1}\\
        &+ |I_l(3,2)-I_r(3,2)|x_{3,2,0}+|I_l(3,2)-I_r(2,2)|x_{3,2,1}
\end{align*}
 \begin{align*}
        H_2 (\mathbf{x})  &= 50x_{1,0,0} + 50x_{2,0,1} +  50x_{2,1,0} + 50x_{3,1,1} +  50x_{1,2,0} +  50x_{2,2,1}.  
\end{align*}
Finally, we compute the third term as $H_3$:
\begin{align*}
        H_3 (\mathbf{x})&=10 (x_{1,0,0}x_{2,0,1}+x_{1,0,1}x_{2,0,0} \\
        &+x_{1,0,0}x_{1,1,1}+x_{1,0,1}x_{1,1,0}+x_{2,0,0}x_{3,0,1}\\
        &+x_{2,0,1}x_{3,0,0}+ x_{2,0,0}x_{2,1,1}+x_{2,0,1}x_{2,1,0}\\
        &+ x_{3,0,0}x_{3,1,1}+x_{3,0,1}x_{3,1,0}+ x_{1,1,0}x_{2,1,1}\\
        &+x_{1,1,1}x_{2,1,0}+ x_{1,1,0}x_{1,2,1}+x_{1,1,1}x_{1,2,0} \\
        &+ x_{2,1,0}x_{3,1,1}+x_{2,1,1}x_{3,1,0}+ x_{2,1,0}x_{2,2,1}\\
        &+x_{2,1,1}x_{2,2,0}+x_{3,1,0}x_{3,2,1}+x_{3,1,1}x_{3,2,0}\\
        &+x_{1,2,0}x_{2,2,1}+x_{1,2,1}x_{2,2,0}+ x_{2,2,0}x_{3,2,1}\\
        &+x_{2,2,1}x_{3,2,0}). 
\end{align*}
\noindent Adding the three terms together, we have the main QUBO model as follows:
\begin{align*}
        H(\mathbf{x}) =& -150x_{1,0,0}-200x_{1,0,1}-200x_{2,0,0}-150x_{2,0,1}\\
        &-200x_{3,0,0}-200x_{3,0,1}-200x_{1,1,0}-200x_{1,1,1}\\
        &-150x_{2,1,0}-200x_{2,1,1}-200x_{3,1,0}-150x_{3,1,1}\\
        &-150x_{1,2,0}-200x_{1,2,1}-200x_{2,2,0}-150x_{2,2,1}\\
        &-200x_{3,2,0}-200x_{3,2,1}\\
        &+400x_{1,0,0}x_{1,0,1}+400x_{2,0,0}x_{2,0,1}\\
        &+400x_{3,0,0}x_{3,0,1}+400x_{1,1,0}x_{1,1,1}\\
        &+400x_{2,1,0}x_{2,1,1}+400x_{3,1,0}x_{3,1,1}\\
        &+400x_{1,2,0}x_{1,2,1}+400x_{2,2,0}x_{2,2,1}\\
        &+400x_{3,2,0}x_{3,2,1}+1800\\
        &+10x_{1,0,0}x_{2,0,1}+10x_{1,0,1}x_{2,0,0}+10x_{1,0,0}x_{1,1,1}\\
        &+10x_{1,0,1}x_{1,1,0}+10x_{2,0,0}x_{3,0,1}+10x_{2,0,1}x_{3,0,0}\\
        &+10x_{2,0,0}x_{2,1,1}+10x_{2,0,1}x_{2,1,0}+10x_{3,0,0}x_{3,1,1}\\
        &+10x_{3,0,1}x_{3,1,0}+ 10x_{1,1,0}x_{2,1,1}+10x_{1,1,1}x_{2,1,0} \\
        &+ 10x_{1,1,0}x_{1,2,1}+10x_{1,1,1}x_{1,2,0}+ 10x_{2,1,0}x_{3,1,1}\\
        &+10x_{2,1,1}x_{3,1,0}+10x_{2,1,0}x_{2,2,1}+10x_{2,1,1}x_{2,2,0}\\
        &+10x_{3,1,0}x_{3,2,1}+10x_{3,1,1}x_{3,2,0}+10x_{1,2,0}x_{2,2,1}\\
        &+10x_{1,2,1}x_{2,2,0}+10x_{2,2,0}x_{3,2,1}+10x_{2,2,1}x_{3,2,0}. 
    \end{align*}

\noindent Giving $H(\mathbf{x})$ to the D-Wave Ocean SDK for the QPU minimization, we obtain the optimal solution $\mathbf{x}^*=\argmin_{\mathbf{x}}H(\mathbf{x})$ as follows:

\begin{multicols}{2}
    \begin{itemize}
        \item $x^*_{1,0,0}=0,$
        \item $x^*_{1,0,1}=1,$
        \item $x^*_{2,0,0}=1,$
        \item $x^*_{2,0,1}=0,$
        \item $x^*_{3,0,0}=1,$
        \item $x^*_{3,0,1}=0,$
        \item $x^*_{1,1,0}=0,$
        \item $x^*_{1,1,1}=1,$
        \item $x^*_{2,1,0}=0,$
        \item $x^*_{2,1,1}=1,$
        \item $x^*_{3,1,0}=1,$
        \item $x^*_{3,1,1}=0,$
        \item $x^*_{1,2,0}=0,$
        \item $x^*_{1,2,1}=1,$
        \item $x^*_{2,2,0}=1,$
        \item $x^*_{2,2,1}=0,$
        \item $x^*_{3,2,0}=1,$
        \item $x^*_{3,2,1}=0,$
    \end{itemize}
\end{multicols}
\noindent which match the corresponding ground-truth disparities shown in (\ref{fig_dqm_stereo_example_d}). We used the D-Wave default parameter settings for the hardware properties and initialized the number of sample-reads as $1000$. Given a pixel $(i,j)\in P$, if $x^*_{i,j,d}=1$ for $d\in D$, then $d$ is the allocated disparity to the pixel $(i,j)$. Therefore, we have the following disparities for the pixels:
\begin{multicols}{2}
        \begin{itemize}
            \item $(1,0) \leftarrow 1$
            \item $(2,0) \leftarrow 0$
            \item $(3,0) \leftarrow 0$
            \item $(1,1) \leftarrow 1$
            \item $(2,1) \leftarrow 1$
            \item $(3,1) \leftarrow 0$
            \item $(1,2) \leftarrow 1$
            \item $(2,2) \leftarrow 0$
            \item $(3,2) \leftarrow 0$
        \end{itemize}
    \end{multicols}
Eq.~\ref{fig_graphs} illustrates the corresponding D-Wave minor embedding for the defined QUBO, obtained by the D-Wave Inspector tool. Furthermore, we provide the timing information in (\ref{tab_timing}).
\begin{table}
\centering
\caption{The D-Wave timing information for the given QUBO example}
\label{tab_timing}
\begin{tabular}{|l|c|}
\hline
\textbf{Type} & \textbf{Time ($\mu s$)} \\
\hline
$qpu\_sampling\_time$ & $104760.00$\\
\hline
$qpu\_anneal\_time\_per\_sample$ & $20.00$ \\
\hline
$qpu\_readout\_time\_per\_sample$ & $64.22$ \\
\hline
$qpu\_access\_time$ & $120523.57$ \\
\hline
$qpu\_access\_overhead\_time$ & $4687.43$\\
\hline
$qpu\_programming\_time$ & $15763.57$ \\
\hline
$qpu\_delay\_time\_per\_sample$ & $20.54$\\
\hline
$total\_post\_processing\_time$ & $852.00$\\
\hline
$post\_processing\_overhead\_time$ &$ 852.00$\\
\hline
\end{tabular}
\end{table}

\begin{figure*}[t]
\centering
        \begin{subfigure}{0.385\textwidth}
                 \centering
                 \includegraphics[width=\textwidth]{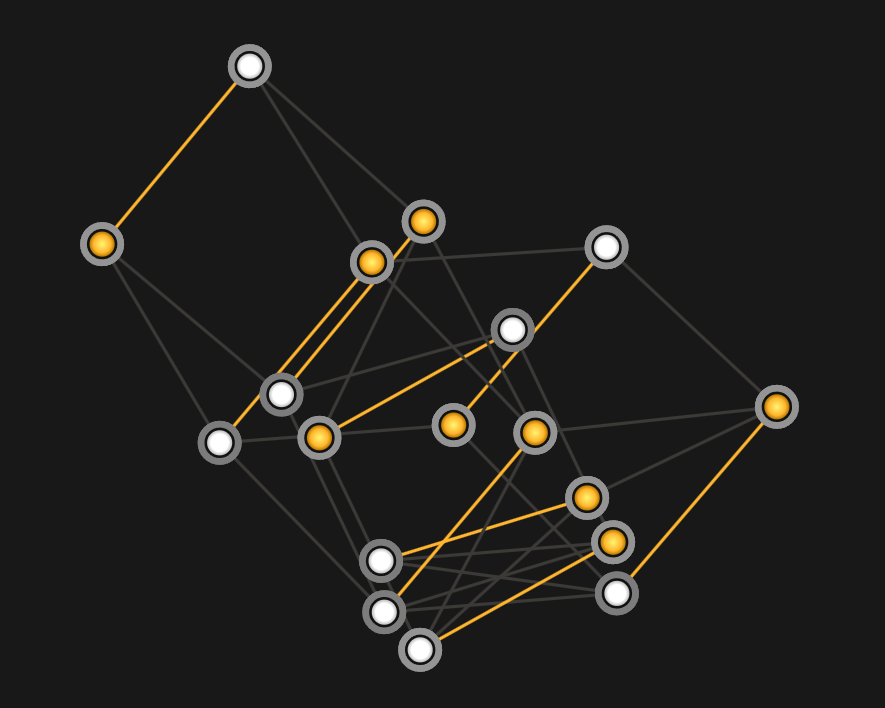}
                 \caption{}
                 \label{fig_graphs_a}
             \end{subfigure}
        \hspace*{0.2em}
        \begin{subfigure}{0.4\textwidth}
                 \centering
                 \includegraphics[width=\textwidth]{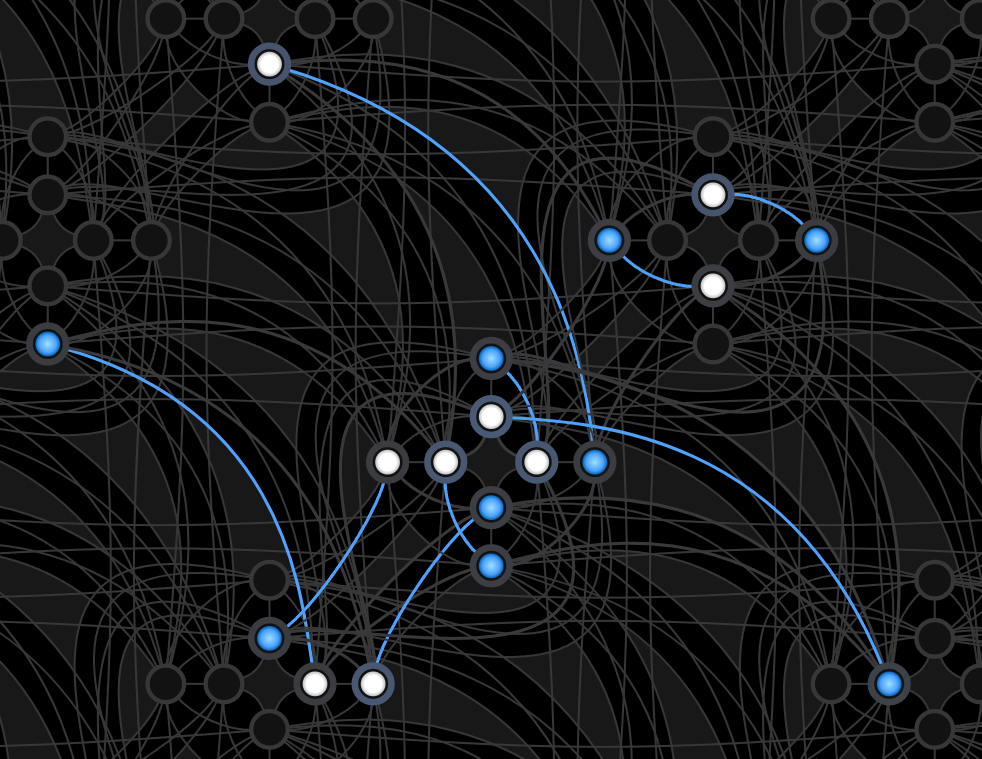}
                 \caption{}
                 \label{fig_graphs_b}
             \end{subfigure}
        \caption {D-Wave minor embedding for the given QUBO example. (a) the QUBO graph, and (b) the QPU graph.}
        \label{fig_graphs}
    \end{figure*}


\begin{thebibliography}{10}

\bibitem{adachi2015}
Steven~H Adachi and Maxwell~P Henderson.
\newblock Application of quantum annealing to training of deep neural networks.
\newblock {\em arXiv preprint arXiv:1510.06356}, 2015.

\bibitem{arrigoni2022quantum}
Federica Arrigoni, Willi Menapace, Marcel~Seelbach Benkner, Elisa Ricci, and
  Vladislav Golyanik.
\newblock Quantum motion segmentation.
\newblock In {\em European Conference on Computer Vision}, pages 506--523.
  Springer, 2022.

\bibitem{benkner2020adiabatic}
Marcel~Seelbach Benkner, Vladislav Golyanik, Christian Theobalt, and Michael
  Moeller.
\newblock Adiabatic quantum graph matching with permutation matrix constraints.
\newblock In {\em 2020 International Conference on 3D Vision (3DV)}, pages
  583--592. IEEE, 2020.

\bibitem{benkner2022quant}
Marcel~Seelbach Benkner, Maximilian Krahn, Edith Tretschk, Zorah L{\"a}hner,
  Michael Moeller, and Vladislav Golyanik.
\newblock Quant: Quantum annealing with learnt couplings.
\newblock {\em arXiv preprint arXiv:2210.08114}, 2022.

\bibitem{besag1986statistical}
Julian Besag.
\newblock On the statistical analysis of dirty pictures.
\newblock {\em Journal of the Royal Statistical Society: Series B
  (Methodological)}, 48(3):259--279, 1986.

\bibitem{birchfield1999multiway}
Stan Birchfield and Carlo Tomasi.
\newblock Multiway cut for stereo and motion with slanted surfaces.
\newblock In {\em Proceedings of the seventh IEEE international conference on
  computer vision}, volume~1, pages 489--495. IEEE, 1999.

\bibitem{birdal2021quantum}
Tolga Birdal, Vladislav Golyanik, Christian Theobalt, and Leonidas~J Guibas.
\newblock Quantum permutation synchronization.
\newblock In {\em Proceedings of the IEEE/CVF Conference on Computer Vision and
  Pattern Recognition}, pages 13122--13133, 2021.

\bibitem{boykov1998markov}
Yuri Boykov, Olga Veksler, and Ramin Zabih.
\newblock Markov random fields with efficient approximations.
\newblock In {\em Proceedings. 1998 IEEE Computer Society Conference on
  Computer Vision and Pattern Recognition (Cat. No. 98CB36231)}, pages
  648--655. IEEE, 1998.

\bibitem{boykov2001fast}
Yuri Boykov, Olga Veksler, and Ramin Zabih.
\newblock Fast approximate energy minimization via graph cuts.
\newblock {\em IEEE Transactions on pattern analysis and machine intelligence},
  23(11):1222--1239, 2001.

\bibitem{calude2017solving}
Cristian~S. Calude and Michael~J. Dinneen.
\newblock Solving the broadcast time problem using a {D-Wave} quantum computer.
\newblock In {\em Advances in Unconventional Computing}, pages 439--453.
  Springer, 2017.

\bibitem{calude2017qubo}
Cristian~S. Calude, Michael~J. Dinneen, and Richard Hua.
\newblock {QUBO} formulations for the graph isomorphism problem and related
  problems.
\newblock {\em Theoretical Computer Science}, 701:54--69, 2017.

\bibitem{vcerny1985thermodynamical}
Vladim{\'\i}r {\v{C}}ern{\`y}.
\newblock Thermodynamical approach to the traveling salesman problem: An
  efficient simulation algorithm.
\newblock {\em Journal of optimization theory and applications}, 45(1):41--51,
  1985.

\bibitem{cruz2018qubo}
William Cruz-Santos, Salvador~E Venegas-Andraca, and Marco Lanzagorta.
\newblock A {QUBO} formulation of the stereo matching problem for {D-Wave}
  quantum annealers.
\newblock {\em Entropy}, 20(10):786, 2018.

\bibitem{denchev2016computational}
Vasil~S Denchev, Sergio Boixo, Sergei~V Isakov, Nan Ding, Ryan Babbush, Vadim
  Smelyanskiy, John Martinis, and Hartmut Neven.
\newblock What is the computational value of finite-range tunneling?
\newblock {\em Physical Review X}, 6(3):031015, 2016.

\bibitem{dixit2020}
Vivek Dixit, Raja Selvarajan, Muhammad~A. Alam, Travis~S. Humble, and Sabre
  Kais.
\newblock Training and {Classification} using a {Restricted} {Boltzmann}
  {Machine} on the {D}-{Wave} {2000Q}.
\newblock {\em arXiv:2005.03247 [cs, stat]}, May 2020.
\newblock arXiv: 2005.03247.

\bibitem{doan2022hybrid}
Anh-Dzung Doan, Michele Sasdelli, David Suter, and Tat-Jun Chin.
\newblock A hybrid quantum-classical algorithm for robust fitting.
\newblock In {\em Proceedings of the IEEE/CVF Conference on Computer Vision and
  Pattern Recognition}, pages 417--427, 2022.

\bibitem{farhi2000quantum}
Edward Farhi, Jeffrey Goldstone, Sam Gutmann, and Michael Sipser.
\newblock Quantum computation by adiabatic evolution.
\newblock {\em arXiv preprint quant-ph/0001106}, 2000.

\bibitem{farina2023quantum}
Matteo Farina, Luca Magri, Willi Menapace, Elisa Ricci, Vladislav Golyanik, and
  Federica Arrigoni.
\newblock Quantum multi-model fitting.
\newblock In {\em Proceedings of the IEEE/CVF Conference on Computer Vision and
  Pattern Recognition}, pages 13640--13649, 2023.

\bibitem{felzenszwalb2006efficient}
Pedro~F Felzenszwalb and Daniel~P Huttenlocher.
\newblock Efficient belief propagation for early vision.
\newblock {\em International journal of computer vision}, 70(1):41--54, 2006.

\bibitem{felzenszwalb2010dynamic}
Pedro~F Felzenszwalb and Ramin Zabih.
\newblock Dynamic programming and graph algorithms in computer vision.
\newblock {\em IEEE transactions on pattern analysis and machine intelligence},
  33(4):721--740, 2010.

\bibitem{geiger1991parallel}
Davi Geiger and Federico Girosi.
\newblock Parallel and deterministic algorithms from mrfs: Surface
  reconstruction.
\newblock {\em IEEE Transactions on Pattern Analysis \& Machine Intelligence},
  13(05):401--412, 1991.

\bibitem{hamzah2016literature}
Rostam~Affendi Hamzah and Haidi Ibrahim.
\newblock Literature survey on stereo vision disparity map algorithms.
\newblock {\em Journal of Sensors}, 2016, 2016.

\bibitem{heidari2022equivalent}
Shahrokh Heidari, Michael~J Dinneen, and Patrice Delmas.
\newblock An equivalent {QUBO Model} to the minimum multi-way cut problem.
\newblock Technical report, Department of Computer Science, The University of
  Auckland, New Zealand, 2022.

\bibitem{heidari2021improved}
Shahrokh Heidari, Mitchell Rogers, and Patrice Delmas.
\newblock An improved quantum solution for the stereo matching problem.
\newblock In {\em 2021 36th International Conference on Image and Vision
  Computing New Zealand (IVCNZ)}, pages 1--6. IEEE, 2021.

\bibitem{kappes2015comparative}
J{\"o}rg~H Kappes, Bjoern Andres, Fred~A Hamprecht, Christoph Schn{\"o}rr,
  Sebastian Nowozin, Dhruv Batra, Sungwoong Kim, Bernhard~X Kausler, Thorben
  Kr{\"o}ger, Jan Lellmann, et~al.
\newblock A comparative study of modern inference techniques for structured
  discrete energy minimization problems.
\newblock {\em International Journal of Computer Vision}, 115(2):155--184,
  2015.

\bibitem{king2019quantum}
James King, Sheir Yarkoni, Jack Raymond, Isil Ozfidan, Andrew~D King,
  Mayssam~Mohammadi Nevisi, Jeremy~P Hilton, and Catherine~C McGeoch.
\newblock Quantum annealing amid local ruggedness and global frustration.
\newblock {\em Journal of the Physical Society of Japan}, 88(6):061007, 2019.

\bibitem{kolmogorov2005convergent}
Vladimir Kolmogorov.
\newblock Convergent tree-reweighted message passing for energy minimization.
\newblock In {\em International Workshop on Artificial Intelligence and
  Statistics}, pages 182--189. PMLR, 2005.

\bibitem{kolmogorov2006comparison}
Vladimir Kolmogorov and Carsten Rother.
\newblock Comparison of energy minimization algorithms for highly connected
  graphs.
\newblock In {\em European Conference on Computer Vision}, pages 1--15.
  Springer, 2006.

\bibitem{komodakis2007approximate}
Nikos Komodakis and Georgios Tziritas.
\newblock Approximate labeling via graph cuts based on linear programming.
\newblock {\em IEEE transactions on pattern analysis and machine intelligence},
  29(8):1436--1453, 2007.

\bibitem{koshka2018}
Yaroslav Koshka and M.~A. Novotny.
\newblock 2000 {Qubit} {D}-{Wave} {Quantum} {Computer} {Replacing} {MCMC} for
  {RBM} {Image} {Reconstruction} and {Classification}.
\newblock In {\em 2018 {International} {Joint} {Conference} on {Neural}
  {Networks} ({IJCNN})}, pages 1--8, July 2018.
\newblock ISSN: 2161-4407.

\bibitem{koshka2017}
Yaroslav Koshka, Dilina Perera, Spencer Hall, and M.~A. Novotny.
\newblock Determination of the {Lowest}-{Energy} {States} for the {Model}
  {Distribution} of {Trained} {Restricted} {Boltzmann} {Machines} {Using} a
  1000 {Qubit} {D}-{Wave} {2X} {Quantum} {Computer}.
\newblock {\em Neural Computation}, 29(7):1815--1837, 2017.

\bibitem{koshka2016}
Yaroslav Koshka, Dilina Perera, Spencer Hall, and M.A. Novotny.
\newblock Empirical investigation of the low temperature energy function of the
  {Restricted} {Boltzmann} {Machine} using a 1000 qubit {D}-{Wave} {2X}.
\newblock In {\em 2016 {International} {Joint} {Conference} on {Neural}
  {Networks} ({IJCNN})}, pages 1948--1954, July 2016.
\newblock ISSN: 2161-4407.

\bibitem{li2022practical}
Jiankun Li, Peisen Wang, Pengfei Xiong, Tao Cai, Ziwei Yan, Lei Yang, Jiangyu
  Liu, Haoqiang Fan, and Shuaicheng Liu.
\newblock Practical stereo matching via cascaded recurrent network with
  adaptive correlation.
\newblock In {\em Proceedings of the IEEE/CVF conference on computer vision and
  pattern recognition}, pages 16263--16272, 2022.

\bibitem{li2020quantum}
Junde Li and Swaroop Ghosh.
\newblock Quantum-soft {QUBO} suppression for accurate object detection.
\newblock In {\em European Conference on Computer Vision}, pages 158--173.
  Springer, 2020.

\bibitem{li1995markov}
S.~Z. Li.
\newblock {\em Markov Random Field Modeling in Computer Vision}.
\newblock Springer-Verlag, Berlin, Heidelberg, 1995.

\bibitem{lucas2014ising}
Andrew Lucas.
\newblock Ising formulations of many np problems.
\newblock {\em Frontiers in physics}, 2:5, 2014.

\bibitem{malekabadi2019disparity}
Ayoub~Jafari Malekabadi, Mehdi Khojastehpour, and Bagher Emadi.
\newblock Disparity map computation of tree using stereo vision system and
  effects of canopy shapes and foliage density.
\newblock {\em Computers and electronics in agriculture}, 156:627--644, 2019.

\bibitem{mcgeoch2014adiabatic}
Catherine~C McGeoch.
\newblock Adiabatic quantum computation and quantum annealing: Theory and
  practice.
\newblock {\em Synthesis Lectures on Quantum Computing}, 5(2):1--93, 2014.

\bibitem{scharstein2002taxonomy}
Daniel Scharstein and Richard Szeliski.
\newblock A taxonomy and evaluation of dense two-frame stereo correspondence
  algorithms.
\newblock {\em International journal of computer vision}, 47(1):7--42, 2002.

\bibitem{sun2003stereo}
Jian Sun, Nan-Ning Zheng, and Heung-Yeung Shum.
\newblock Stereo matching using belief propagation.
\newblock {\em IEEE Transactions on pattern analysis and machine intelligence},
  25(7):787--800, 2003.

\bibitem{DwaveDoc_hiybrid}
D-Wave Systems.
\newblock {D-Wave} hybrid solver service + advantage: Technology update.
\newblock [Online]. Available from:
  \url{https://www.dwavesys.com/media/m2xbmlhs/14-1048a-a_d-wave_hybrid_solver_service_plus_advantage_technology_update.pdf},
  June 2022.

\bibitem{DWave_onehot}
D-Wave Systems.
\newblock Discrete quadratic models.
\newblock [Online]. Available from:
  \url{https://docs.ocean.dwavesys.com/en/latest/concepts/dqm.html}, June 2023.

\bibitem{szeliski2006comparative}
Richard Szeliski, Ramin Zabih, Daniel Scharstein, Olga Veksler, Vladimir
  Kolmogorov, Aseem Agarwala, Marshall Tappen, and Carsten Rother.
\newblock A comparative study of energy minimization methods for markov random
  fields.
\newblock In {\em European conference on computer vision}, pages 16--29.
  Springer, 2006.

\bibitem{szeliski2008comparative}
Richard Szeliski, Ramin Zabih, Daniel Scharstein, Olga Veksler, Vladimir
  Kolmogorov, Aseem Agarwala, Marshall Tappen, and Carsten Rother.
\newblock A comparative study of energy minimization methods for markov random
  fields with smoothness-based priors.
\newblock {\em IEEE transactions on pattern analysis and machine intelligence},
  30(6):1068--1080, 2008.

\bibitem{tappen2003comparison}
Marshall~F Tappen and William~T Freeman.
\newblock Comparison of graph cuts with belief propagation for stereo, using
  identical mrf parameters.
\newblock In {\em Computer Vision, IEEE International Conference on}, volume~3,
  pages 900--900. IEEE Computer Society, 2003.

\bibitem{veksler1999efficient}
Olga Veksler.
\newblock {\em Efficient graph-based energy minimization methods in computer
  vision}.
\newblock Cornell University, 1999.

\bibitem{middlebury}
Middlebury~Stereo Vision.
\newblock 2001 middleburry stereo datasets.
\newblock [Online]. Available from:
  \url{https://vision.middlebury.edu/stereo/data/scenes2001/}, June 2022.

\bibitem{middlebury_MRF}
Middlebury~Stereo Vision.
\newblock Middleburry {MRF} implementations.
\newblock [Online]. Available from: \url{https://vision.middlebury.edu/MRF/},
  June 2022.

\bibitem{voulodimos2018deep}
Athanasios Voulodimos, Nikolaos Doulamis, Anastasios Doulamis, Eftychios
  Protopapadakis, et~al.
\newblock Deep learning for computer vision: A brief review.
\newblock {\em Computational intelligence and neuroscience}, 2018, 2018.

\bibitem{wainwright2004tree}
Martin Wainwright, Tommi Jaakkola, and Alan Willsky.
\newblock Tree consistency and bounds on the performance of the max-product
  algorithm and its generalizations.
\newblock {\em Statistics and computing}, 14(2):143--166, 2004.

\bibitem{wainwright2005map}
Martin~J Wainwright, Tommi~S Jaakkola, and Alan~S Willsky.
\newblock Map estimation via agreement on trees: message-passing and linear
  programming.
\newblock {\em IEEE transactions on information theory}, 51(11):3697--3717,
  2005.

\bibitem{wang2013markov}
Chaohui Wang, Nikos Komodakis, and Nikos Paragios.
\newblock {Markov Random Field} modeling, inference \& learning in computer
  vision \& image understanding: A survey.
\newblock {\em Computer Vision and Image Understanding}, 117(11):1610--1627,
  2013.

\bibitem{yaacoby2021comparison}
Ran Yaacoby, Nathan Schaar, Leon Kellerhals, Oren Raz, Danny Hermelin, and Rami
  Pugatch.
\newblock A comparison between {D-Wave} and a classical approximation algorithm
  and a heuristic for computing the ground state of an ising spin glass.
\newblock {\em arXiv preprint arXiv:2105.00537}, 2021.

\bibitem{yu2007efficient}
Tianli Yu, Ruei-Sung Lin, Boaz Super, and Bei Tang.
\newblock Efficient message representations for belief propagation.
\newblock In {\em 2007 IEEE 11th International Conference on Computer Vision},
  pages 1--8. IEEE, 2007.

\bibitem{yurtsever2022q}
Alp Yurtsever, Tolga Birdal, and Vladislav Golyanik.
\newblock Q-fw: A hybrid classical-quantum frank-wolfe for quadratic binary
  optimization.
\newblock In {\em European Conference on Computer Vision}, pages 352--369.
  Springer, 2022.

\bibitem{zaech2022adiabatic}
Jan-Nico Zaech, Alexander Liniger, Martin Danelljan, Dengxin Dai, and Luc
  Van~Gool.
\newblock Adiabatic quantum computing for multi object tracking.
\newblock In {\em Proceedings of the IEEE/CVF Conference on Computer Vision and
  Pattern Recognition}, pages 8811--8822, 2022.

\end{thebibliography}
\end{document}